\documentclass[preprint,onefignum,onetabnum]{siamart251216}

\usepackage{array}

\usepackage{lipsum}
\usepackage{amsfonts}
\usepackage{algorithm}
\usepackage{algpseudocode}

\algrenewcommand\algorithmicrequire{\textbf{Input:}}
\algrenewcommand\algorithmicensure{\textbf{Output:}}

\usepackage{hyperref}
\usepackage{graphicx}
\usepackage{amsmath, amssymb}
\usepackage{xcolor}
\usepackage{cite}
\usepackage{subfig}
\usepackage{float}

\newcommand{\M}{M} 
\newcommand{\cb}{\mathcal{B}} 
\newcommand{\R}{\mathbb{R}}
\newcommand{\K}{\mathcal{K}}
\newcommand{\V}{\mathbf{v}}
\newcommand{\X}{\mathcal{X}}
\newcommand{\poset}{\mathcal{P}}

\numberwithin{equation}{section}
\ifpdf

\DeclareGraphicsExtensions{.eps,.pdf,.png,.jpg}
\else
  \DeclareGraphicsExtensions{.eps}
\fi
\crefname{hypothesis}{Hypothesis}{Hypotheses}

\newcommand{\eqfnnum}{7}
\makeatletter
\newcommand{\eqmark}{\hbox{$\m@th^{\@fnsymbol{\eqfnnum}}$}}
\newcommand{\eqfootnote}{%
  \begingroup
  \def\@fnsymbol##1{\ensuremath{\ifcase##1\or *\or \dagger\or \ddagger\or
    \mathsection\or \mathparagraph\or \|\or \#\or \dagger\dagger
    \or \ddagger\ddagger\else\@ctrerr\fi}}%
  \renewcommand{\thefootnote}{\fnsymbol{footnote}}%
  \def\@makefnmark{\hbox to\z@{$\m@th^{\@thefnmark}$\hss}}%
  \long\def\@makefntext##1{\parindent 1em\noindent
          \hbox to1.8em{\hss$\m@th^{\@thefnmark}$}##1}%
  \footnotetext[\eqfnnum]{These authors contributed equally.}%
  \endgroup}
\makeatother

\headers{Using PH to Analyze Access to Heterogeneous-Quality Resources}{S. Tymochko, G. Grindstaff,  A. Hickok, J. Luo, M. A. Porter}

\title{Using Persistent Homology to Analyze Access to Heterogeneous-Quality Resources and Heterogeneous-Severity Nuisances
\thanks{{Funding:} GG was supported by NSF MSPRF 2202895 and EPSRC Centre to Centre Research Collaboration grant EP/Z531224/1. AH was supported by NSF grant DMS-2303402. 
JL was supported by NSF grant numbers 1829071 and 2136090.
}}

\author{Sarah Tymochko\thanks{Department of Mathematics and Computer Science, College of the Holy Cross, Worcester, MA, USA
  (\email{stymochko@holycross.edu}).} \and Gillian Grindstaff\eqmark\thanks{Mathematical Institute, Oxford University, Oxford, UK
  (\email{grindstaff@maths.ox.ac.uk}).} \and Abigail Hickok\eqmark\thanks{Department of Mathematics, Columbia University, New York, NY, USA
  (\email{ah3966@columbia.edu}).} \and Jiajie Luo\eqmark\thanks{Knowledge Lab, University of Chicago, Chicago, IL, USA
  (\email{jerryluo8@uchicago.edu}).} \and Mason A. Porter\thanks{Nordita, KTH Royal Institute of Technology, Stockholm University, and Uppsala University, Stockholm, Sweden; Department of Mathematics, Stockholm University, Stockholm, Sweden; Santa Fe Institute, Santa Fe, NM, USA; Department of Mathematics, University of California, Los Angeles, CA, USA; Department of Sociology, University of California, Los Angeles, CA, USA (\email{mason.porter@su.se}).}}

\usepackage{amsopn}

\makeatletter
\newcommand*{\addFileDependency}[1]{
  \typeout{(#1)}
  \@addtofilelist{#1}
  \IfFileExists{#1}{}{\typeout{No file #1.}}
}
\makeatother

\ifpdf
\hypersetup{
  pdftitle={Using PH to Analyze Access to Heterogeneous-Quality Resources},
  pdfauthor={S. Tymochko, G. Grindstaff, A. Hickok, J. Luo, M. A. Porter}
}
\fi

\begin{document}

\maketitle
\eqfootnote

\begin{abstract}
We develop a framework to use multiparameter persistent homology (PH) to examine access to heterogeneous-quality resources and exposure to heterogeneous-severity nuisances in a geographic region. 
Persistent homology, which is a type of topological data analysis {(TDA)}, has been employed previously to examine resource coverage. Unlike prior approaches, which used {one-parameter} PH to study resource coverage and nuisance exposure, our method accounts for
heterogeneous-quality resources. 
{Our framework, which employs a computationally-efficient approximation of multiparameter PH,
allows one} to study access to any resource ({or} exposure of any nuisance) using any notion of quality (or severity).
Using the city of Chicago as an example region, we employ our framework to detect clusters of poor access to public parks, overexposure to landfills, and both underexposure and overexposure to pubs and bars. 
\end{abstract}

\begin{keywords}
  topological data analysis, persistent homology, Geographic Information Systems (GIS), resource coverage, {nuisance exposure}, public policy
\end{keywords}

\begin{AMS} 
55N31, 62R40, 91B32, 91B72, 91D20
\end{AMS}



\section{Introduction}\label{intro}

{Evaluating} equity in the access to resources --- such as public parks, hospitals, and polling sites --- is a complex problem that requires the quantification of {two key} characteristics:
\begin{itemize}
    \item[(1)]{{\em geographic access}, which depends on the locations of the resource sites, their relative proximity, and the ease of accessing them; and}
    \item[(2)]{{\em resource supply}, which depends on the quantity and/or quality of the available resources \cite{beyond_proximity} at each resource site.}
\end{itemize} 
Although people in a community often 
{feel} disparities when they seek to use
shared resources, it can be difficult to quantify characteristics (1) and (2), which also
depend on what resource one is considering. For example, geographic access typically depends on {both transportation mode and local infrastructure}, resource supplies are limited, and notions of equity include subjective ideas.

In this paper, we employ topological data analysis (TDA)~\cite{dey2022} and present an exploratory approach
to examine the access of different regions in a city to public resources and thereby identify regions with poor access. We also use our approach to examine regions with overexposure to nuisances (such as pollutants and hazards). Regions can have poor access to a resource either due to their geographic distance from resource sites or to the poor quality of nearby resource sites. {Analogously, regions can experience overexposure to nuisances due to both geographic distance and nuisance severity.}
We take inspiration from multiparameter persistent homology (PH)~\cite{Carlsson2009, Botnan2022}, {which is} a generalization of {one-parameter PH~\cite{otter2017}} that allows one to analyze {multi}-scale topological structures 
that depend simultaneously on two or more parameters.
In particular, two-parameter PH {tracks} how the topological structure of a space changes through a \textit{bifiltration}, {which is} a nested sequence of topological spaces {that is} parameterized by two values.
Multiparameter PH has well-documented computational and interpretive challenges~\cite{Carlsson2009, Lesnick2015, clause_meta-diagrams_2023, otter2019}, and many researchers have thus developed workarounds to use multiparameter PH in {practice. See Section~\ref{sssec:mph_relatedwork} for a discussion of these workarounds.}

{
By using a bifiltration, our PH-inspired approach incorporates both distance and quality to assess resource accessibility and nuisance severity.} 
To do this, we build a data structure that is specific to the shape of a geographic region (e.g., the city of Chicago) and vary both distance and quality (which are our two parameters).
Calculating PH then reveals how the identified poor-access regions and nuisance overexposure vary with respect to these parameters. 
{Throughout our paper, we refer to desirable sites as ``resources'' and to undesirable sites as ``nuisances''. Both resources and nuisances have associated equity and proximity concerns.
We illustrate our framework by examining access and exposure to public parks, public houses (i.e., pubs) and bars\footnote{{In the} present paper, when we refer to {``pubs", we mean both} pubs and bars.}, and landfills.}


\subsection{Our Contributions} 

In the present paper, we {use} TDA-inspired methodology to study {the accessibility of geographically-distributed resources with heterogeneous qualities and the exposure to nuisances with heterogeneous severities.}
We use ideas from multiparameter PH to simultaneously examine both the quality and access of resource sites (and both the severity and exposure to nuisance sites). 
Our approach provides a computationally efficient framework {to approximate} two-parameter {PH}.
{This approach, which we illustrate on three case studies, is applicable to any resource (respectively, nuisance) with any scaler notion of quality (respectively, severity).}


\subsubsection{Mathematical Contributions}

{Our} framework pragmatically leverages {PH-inspired} insights to analyze joint spatial patterns that depend on multiple parameters.
For each resource or nuisance, we construct a bifiltration (see Section~\ref{sec:our_bifiltration}) 
of its associated geographic region (e.g., the city of Chicago) that incorporates both (1) the distance to the resource or nuisance site and (2) the quality/severity of the site. 
The connected components of these bifiltrations indicate 
disadvantaged {regions} at different distance and quality/severity scales.
For resources, our bifiltration identifies {regions} that have insufficient coverage. 
The bifiltration at distance and quality parameters $(r, q)$ consists of the areas where the total resource quality within a radius $r$ is less than $q$.
Analogously, for nuisances, our bifiltration identifies {overexposed regions.} 
{The bifiltration at distance and quality parameters $(r, q)$ consists of areas where the total nuisance exposure within radius a $r$ is at least $q$.}
We obtain the undesirable {regions} by aggregating components that correspond to 0D PH classes in our bifiltration (see Section~\ref{sec:aggregation}).
This yields a visual representation of the geographic locations of the undesirable {regions}.


\subsubsection{Case Studies}

{We present three case studies {in which} we identify disadvantaged geographic {regions.}
These {regions} have either poor access to quality resources or overexposure to nuisances.}
In our first case study, we use our PH-based approach to quantify the accessibility of public parks in Chicago.
We identify regions that have poor
access to quality parks relative to their surroundings. 
We assume that it is desirable to be close
to public parks, and prior studies have demonstrated that close proximity to parks has positive effects on both physical health \cite{kaczynski_proximity_park_2014} and mental health \cite{sturm_proximity_park_2014,SCHERTZ2025102624}.
{Other researchers have demonstrated that green spaces at schools impact both the health and well-being of children and the property values in surrounding neighborhoods \cite{gorjian2025}.}

In our second case study, 
we use our PH-based approach to {quantify} the exposure to landfills in Chicago.
We identify regions that are overexposed to landfills relative to their {surrounding regions}.
We assume that it is undesirable to be near a landfill, and studies have demonstrated that proximity to landfills has adverse health affects \cite{njoku_landfill_2019,vrijheid_landfill_2000}. 

In our third case study, we use our PH-based approach to {examine both} access and exposure to
pubs in Chicago. {Because} pubs have both positive effects \cite{third_place} and negative effects 
\cite{seid_alcohol_2018,auchincloss_alcohol_2022,halonen_alcohol_living_2013}
on people in nearby {regions, we analyze them as both resources and nuisances.}
For instance, proximity to pubs provides access to social interactions while simultaneously subjecting people
to potentially unwanted exposure to irritants like noise and alcoholism.



\subsection{Related Work}

\subsubsection{{Topological Data Analysis for Resource Coverage}}

Researchers have employed TDA to study resource coverage in several situations using different notions of ``distance.'' Hickok et al.~\cite{Hickok_Jarman_Johnson_Luo_Porter_2024} used PH to analyze the coverage of polling sites in several {United States} cities. To calculate distances for their computation of PH, they considered both travel time to (and from) polling sites and waiting times at polling sites.
{Using a similar approach,} Gonzalez-Cruz et al. \cite{cruz-gonzalez2025sexual_healthcare} {computed PH with travel time as distance} to assess the accessibility of Planned Parenthood clinics and Federally Qualified Health Centers in California{, and they thereby identified} gaps in healthcare coverage and {regions} that are vulnerable to the defunding of {such clinics.}
O'Neil and Tymochko~\cite{oneil_evaluating_2024} {also employed a related PH approach, although their distance function was geographic Euclidean distance,} to study the coverage of cooling centers in four US cities{.}

{In most of the papers above, researchers used travel time as the distance for computing PH. In contrast, our present paper does not use travel time.}
When {we examine nuisance exposure}, we {use} Euclidean distance {because} factors like pollution and noise {worsen with}
proximity.
When we {examine resource coverage}, we {use} $L_1$ distance as a proxy for walking time {because} residents of a city {ideally can access} resources by walking.

\subsubsection{Topological Data Analysis {of} {Other} Geospatial and Social Systems}

In the present paper, we build on a growing body of research on applying PH and other TDA methods to geospatial systems, geographic {data}, and social systems. 
See \cite{feng2025} for a general discussion of using TDA to study social systems.

Researchers have also {used PH in a variety of geospatial applications beyond resource coverage.}
Feng and Porter {examined} voting data~\cite{feng2021} and city-street networks and other spatial data~\cite{feng2020, tda_spatial}.
Duchin et al.~\cite{duchin_homological_2020} used PH to study electoral redistricting in the {US}.
Fairchild and Lemoine \cite{fairchild2025mortality} used PH and MAPPER \cite{singh2007topological} to analyze {geospatial} mortality patterns in the US.
{Anderson et al.~\cite{anderson2022}, Kauba and Weighill \cite{kauba_demographics_2024}, and Rodr\'iguez V\'azquez and Cuerno \cite{rodriguez2025quantifying} used TDA to study various socioeconomic and demographic spatial patterns.}
{Cocoran} and Jones \cite{corcoran_topological_2023} used PH to examine several geospatial point-cloud data sets, including for an analysis of the distribution
of public houses (i.e., pubs) across many cities in the United Kingdom. Our {decision} to include an example with pubs draws inspiration from them.

Researchers have also used PH
to examine {phenomena such as street-network connectivity~\cite{corcoran_persistent_2022,feng2020,feng2021}, traffic dynamics~\cite{carmody_topological_2021}, and racial segregation~\cite{friesen_understanding_2024} in urban settings.}
Hickok et al.~\cite{covid} 
{used PH to detect spatiotemporal anomalies in COVID-19 outbreaks in
 Los Angeles and to examine
 COVID-19 vaccination uptake across} 
 New York City at a single point in time.
Arulandu et al.~\cite{arulandu2025grapevine} introduced a vineyard-based distance to measure topological dissimilarity between data sets. 
In one of their case studies, they used their 
{notion of} distance to detect the geographic separation between Black and Hispanic voters in 100 different US cities. 


\subsubsection{Multiparameter Persistent Homology} 
\label{sssec:mph_relatedwork}

As we discussed previously, multiparameter PH poses many computational and interpretive challenges~\cite{Botnan2022}.
Unlike one-parameter persistence modules, a multiparameter persistence module does not decompose into a collection of intervals, so there is no analogue of a barcode and no complete discrete invariant~\cite{Carlsson2009,Botnan2022}.
Consequently, researchers have developed a variety of workarounds that trade completeness for computability, stability, or interpretability~\cite{clause_meta-diagrams_2023,vipond_landscapes_2020,Loiseaux2023,loiseaux2025multi,Bjerkevik_2025,Lesnick2015,Lesnick2022,Scoccola2023}.
These approaches produce a \emph{summary} of a multiparameter persistence module; such summaries are well-suited for comparing data sets with each other or for classifying them.
Researchers have applied these tools to spatially-structured data \cite{vipond_multiparameter_2021,Chung_Day_Hu_2022}.

Researchers have also developed software implementations of these ideas.
RIVET \cite{rivet,Lesnick2015,Lesnick2022} provides interactive visualization of two-parameter persistence modules; a user selects a line in the parameter plane
and RIVET displays the barcode of the corresponding one-parameter slice.
{\sc Persistable}~\cite{Scoccola2023,Rolle2020} uses the degree--Rips bifiltration to perform density-based clustering that is stable with respect to the choice of density and scale parameters.
The {\tt multipers} package~\cite{multipers,Loiseaux2023, loiseaux2025multi} implements several multiparameter vectorizations for use in machine learning.

While these approaches provide helpful and robust summaries of multiparameter persistence, they do not address \emph{where} in the data a given topological feature lies.
Therefore we do not directly compute multiparameter PH or any of the aformentioned summaries in the present paper. 
Instead we develop an approach to identify the geographic regions that correspond to the topological features from the bifiltration.
In particular, in Section~\ref{sec:aggregation}, we describe how we aggregate the connected components of our bifiltration across both parameters in order to obtain geographical representations of PH classes.


\subsubsection{Geography and Urban Planning}

Unsurprisingly, one can use tools from geography and urban planning to study resource accessibility. {One influential framework is the notion of a 15-minute city, in which residents can reach daily necessities within a 15-minute walk or bicycle ride. Other approaches assess accessibility by measuring whether resources fall within a fixed radius or travel-time threshold from a given location, using methods such as the two-step floating-catchment area model. Other researchers flip the perspective and ask how far people must travel to reach a desired resource, or what proportion of residents live within an acceptable distance of one.}

One urban-planning concept that is relevant to highlight is the notion of a
``15-minute city'' {\cite{moreno2021introducing}, in which residents can access daily necessities (such as work, schools, shopping, parks, and healthcare) within a 15-minute walk or bicycle ride from their home. 
Researchers have worked to quantify how well existing cities follow a 15-minute-city framework and have discussed the inequities due to some cities {fitting into the framework better} than others \cite{bruno2024}. 
{Wu and Zhang~\cite{wu2026} used a TDA approach to map how residents of Shenzhen move through the city at a 15-minute-walk scale, characterizing both individual mobility patterns and broader neighborhood-level activity zones. In these approaches, researchers considered resources that lay within a fixed travel-time threshold of 15 minutes.}

{There are other} approaches {that} have considered regions that lie within a fixed radius from their nearest resource site \cite{radke_spatial_2000,luo_measures_2003, SCHIPPERIJN2017253}.
The two-step floating-catchment area (2SFCA) method \cite{radke_spatial_2000,luo_measures_2003} {is a} gravity-like model \cite{barbosa2018} of spatial interactions that was used initially to determine accessibility to primary-care physicians. 
2SFCA accounts for distances to physicians and the number of patients within a specified distance from a physician.
Researchers have used the 2SFCA method and extensions of it to assess
the accessibility of various resources (including elderly care \cite{liu_2sfca_elderly_care}, food outlets \cite{jin_huff-based_2sfca}, and public parks) for {several} cities in 
China \cite{dai_measuring_2022, qin_spatial_2020, xing_environmental_2020}.
{Another approach for examining the accessibility of resources is to calculate the median number of resource sites within a threshold distance~\cite{SCHIPPERIJN2017253}.}

An alternative approach is to examine how far people need to travel to access particular resources.
For example, Schindler et al.~\cite{schindler2022} studied how far residents of three European cities need to travel to access green space, and they concluded that the distances to access the most popular green spaces exceed recommended distances of accessibility policies.
In contrast to the perspective of a 15-minute city, {they} asked people to {specify their} desired green-space quality, and they then determined the distance that people needed to travel to access sufficient resources.
Other researchers have quantified resource access by examining the median distance to resource centers \cite{SCHIPPERIJN2017253,ZHANG2023104191} or the proportion of people that live within a threshold distance of a resource \cite{KaufmannVisputeKansal}.


\subsubsection{Approaches from Network Analysis and Machine Learning}

Researchers have also {employed} network-based approaches to examine resource accessibility.
Moi et al.~\cite{moi_urban_topology_2024} used a network-theoretic framework to quantify green-space accessibility in three cities in Italy.
They incorporated distance and overcrowding by combining a Voronoi tessellation with agent-based dynamics.
Boncinelli et al.~\cite{florence_walk} took inspiration from 
{the notion of a 15-minute city}  
and used network analysis to examine service accessibility in Florence, Italy. 
{They represented services by the nodes of a network,}
and they used edge weights to encode the extent of
shared access between these services. 
They identified clusters that consisted of services with shared accessibility and the neighborhood that they served.
Horton et al.~\cite{horton2025,horton2025b} used an optimization approach to examine the accessibility of groceries in walkable cities and of polling sites in cities, and Skipper et al.~\cite{skipper2025} investigated how to optimize fair geographic access to polling sites.
Weng et al.~\cite{weng2025beyonddistance} trained a neural-embedding model on observed human mobility trajectories from about 25 million personal devices 
to infer functional distances between locations in 11 cities. 
Their study revealed both visible and invisible mobility barriers in urban spaces. 
Other researchers have focused on accessibility of resources via public transit~\cite{sfil2025}.


\subsection{Organization of the Paper}

Our paper proceeds as follows. In Section~\ref{sec:background}, we give background about {PH}. In Section~\ref{sec:our_bifiltration}, we introduce our construction of a bifiltration in distance and quality. Further, we detail our multiparameter-persistence-inspired approach to identifying the geographical regions that underlie the topological features. 
In Section~\ref{sec:results}, we present the results of our {PH-inspired method in} three case studies. In Section~\ref{conc}, we conclude and discuss implications and limitations of our work.
In Appendix \ref{app:data}, we give details about data access and our data processing. 


\subsection{Data and Code}

We employ data 
from the Chicago Data Portal \cite{chicagodata}, the Landfill Methane Outreach Program (LMOP) of the United States Environmental Protection Agency~\cite{landfilldata}, OpenStreetMap (OSM) \cite{openstreetmap}, and Google Reviews \cite{GoogleAPI}.
In Appendix \ref{app:data}, we give details about data access and data processing. Our code is available at \hyperlink{https://github.com/sarahtymochko/multiparam-resource-coverage}{https://github.com/sarahtymochko/multiparam-resource-coverage} \cite{ourgithub}{.}


\section{Background about Persistent Homology}
\label{sec:background}

In this section, we briefly review relevant mathematical background from TDA and PH. For thorough introductions, see \cite{Edelsbrunner_Harer_2009,otter2017} for one-parameter PH and \cite{Botnan2022} for multiparameter PH. See \cite{dey2022} for 
an extensive introduction to TDA and PH.


\subsection{One-Parameter PH}  \label{single}
A \emph{one-parameter filtration} is a nested sequence \linebreak $\{\K_r\}_{r \in \R}$ of topological spaces that is indexed by a \emph{filtration parameter} $r$ and satisfies {either 
$\K_s \subseteq \K_r$ for all $s \leq r$ or $\K_r \subseteq \K_s$ for all $s \leq r$.}
{Equivalently, a one-parameter filtration is a either a functor $\mathcal{F}$ from $\R$ (viewed as a poset category) to ${\bf Top}$ or a functor $\mathcal{F}$ from $\R^{\mathrm{op}}$ to ${\bf Top}$, where ${\bf Top}$ is the category whose objects are topological spaces and whose morphisms are inclusion maps, and where one obtains $\R^{\mathrm{op}}$ 
by reversing the order on $\R$. A filtration where $\K_s \subseteq \K_r$ for all $s \leq r$ is equivalent to a functor $\mathcal{F}: \R \to {\bf Top}$. A filtration where $\K_r \subseteq \K_s$ for all $s \leq r$ is equivalent to a functor $\mathcal{F}: \R^{\mathrm{op}} \to {\bf Top}$.}

Given a filtration $\{\K_r\}_{r \in \R}$, one can compute the homology of each topological space $\mathcal{K}_{r}$.
A homology class represents a hole in 
$\mathcal{K}_{r}$. In the present paper, we only consider {0D homology classes}, which represent connected components. \emph{Persistent homology} (PH) tracks homology classes through a filtration by recording ``births'' 
of new homology classes and ``deaths'' of existing homology classes. 
The birth value of a homology class is the smallest
value $b$ such that the class appears in $\mathcal{K}_b$. For 0D PH, birth entails that a new component arises
in a filtration. The death value of a homology class is the smallest
value $d$ such that the class is no longer
in a filtration. For 0D PH, death entails that a 
component has merged with an existing component that was born earlier (i.e., a component with smaller birth value).
One can summarize the PH of a filtration as a multiset of the birth and death values, and one can visualize it as a scatter plot (i.e., a so-called \emph{persistence diagram}) of {death values versus birth values}.


\subsection{Naive Example of a One-Parameter Filtration for Resource and Nuisance Applications}\label{sec:single_baseline}

In our case studies, which we discuss in detail in Section~\ref{sec:results}, our data includes a geographic domain $\mathcal{D}$ (e.g., the city of Chicago), a discretization $\mathcal{B} = \{B_j\}$ of census units (which are either census blocks or census tracts) $B_j \subseteq \mathcal{D}$, and a set $\{M_i\}$ of resource sites or nuisance sites $M_i \subseteq \mathcal{D}$. For example, in Section~\ref{sec:parks}, $\{M_i\}$ is the set of public parks in Chicago.

We define a naive one-parameter filtration that is based on the distance to the closest resource site or nuisance site $M_i$.
For a census unit $B_j$, we calculate
\begin{equation*}
	g(B_j) := \min_i \, d(M_i,B_j) \,.
\end{equation*}	
Depending on the application, we compute either a superlevel-set filtration
\begin{equation*}
	g^{-1}(r_k,\infty) \subseteq g^{-1}(r_{k-1},\infty) \subseteq \cdots \subseteq g^{-1}(r_1,\infty) 
\end{equation*}
or a sublevel-set filtration
\begin{equation*}
	g^{-1}(-\infty, r_1) \subseteq g^{-1}(-\infty, r_2) \subseteq \cdots \subseteq g^{-1}(-\infty,r_k)  \,,
\end{equation*}
where $r_1 < r_2 < \cdots < r_k$. 
For resources, the superlevel-set filtration identifies the regions whose population needs to travel the farthest to access a resource site.
For nuisances, the sublevel-set filtration identifies the regions {whose populations are} most exposed to {a nuisance site}.

Let's use a superlevel-set filtration to analyze access to public parks in Chicago, which we consider in detail using a different approach in Section~\ref{sec:parks}.
In Figure~\ref{fig:Parks_Superlevel_Filtration}, we show a few steps in this filtration. 
The census units (which, in this case, are census blocks) that appear earliest in the filtration are the ones that are farthest from a park. The census units merge quickly into a single connected component, so it is difficult to draw conclusions from this naive superlevel-set filtration.

\begin{figure}
    \centering
\includegraphics[width=\linewidth]{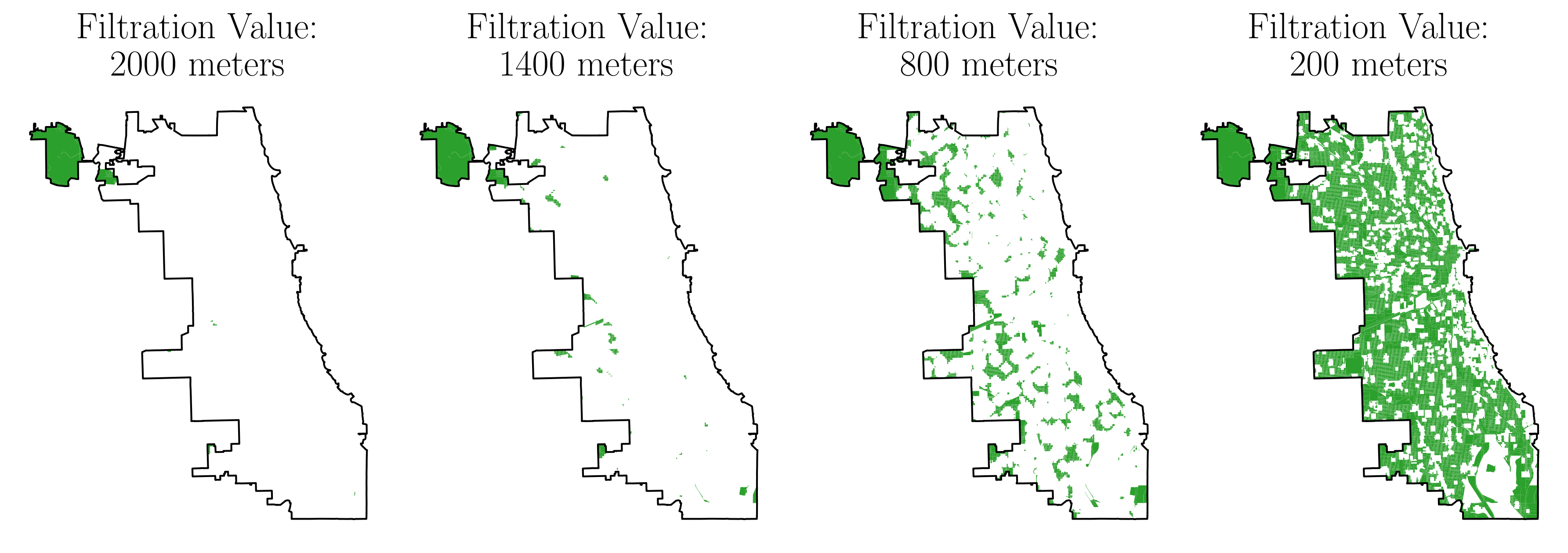}
    \caption{Four steps in a naive superlevel-set filtration for public parks in Chicago.
    } \label{fig:Parks_Superlevel_Filtration}
\end{figure}

These sublevelet-set filtrations and superlevel-set filtrations suffer from the fact that they do not account for any information about the individual sites $M_i$. 
In our case studies, the resource and nuisance sites have rather heterogeneous qualities (for resources) or severities (for nuisances). 
In Section~\ref{sec:multi}, we define a \emph{bifiltration}, which will allow us to
better assess resource coverage and nuisance severity (see Section~\ref{sec:our_bifiltration}).


\subsection{Multiparameter Persistent Homology}\label{sec:multi}
A \emph{multifiltration} is a set \linebreak $\{\K_{\V}\}_{{\V}\in \poset}$ of 
topological spaces that is indexed by a poset (i.e., partially ordered set) $\poset$ that satisfies $\K_{\mathbf{u}}\subseteq \K_{\V}$ for ${\bf u} \leq \V$.\footnote{Equivalently, a multifiltration is a functor $\mathcal{F} : \poset \to {\bf Top}$, where 
$\poset$ is
a category whose objects are the elements of $\poset$ and there is
a unique morphism $\mathbf{u} \to \V$ if $\mathbf{u} \leq \V$.} 
For example, one can consider $\mathcal{P} = \R^n$ with the standard partial order $\mathbf{u} \leq \V$ if $u_i \leq v_i$ for all $i$. 
Sometimes, it is desirable to ``reverse the order'' of some parameters. 
For example, $\R^{\textnormal{op}} \times \R$ is the poset with the underlying set $\R^2$ and the partial order $\mathbf{u} \leq \V$ whenever $u_1 \geq v_1$ and $u_2 \leq v_2$. Another example is $\R \times \R^{\textnormal{op}}$, which is the poset with the underlying set $\R^2$ and the partial order $\mathbf{u} \leq \V$ whenever $u_1 \leq v_1$ and $u_2 \geq v_2$. 
A \emph{bifiltration} is a multifiltration that is indexed by
$\R^2$, $\R^{\textnormal{op}} \times \R$, or $\R \times \R^{\textnormal{op}}$.
In Section~\ref{sec:our_bifiltration}, we consider two bifiltrations. One of them is indexed by the poset $\mathcal{P} = \R^{\textnormal{op}} \times \R$, and the other is indexed by the poset $\mathcal{P} = \R \times \R^{\textnormal{op}}$.

Given a multifiltration $\{\K_{\V}\}_{\V \in \poset}$, one can compute the homology of each topological space $\K_{\V}$ (just as one does for one-parameter PH). 
Additionally, as in one-parameter PH, an inclusion $\iota: \K_{\bf u} \subseteq \K_{\bf v}$ induces a map $\iota_* : H(\K_{\bf u}) \to H(\K_{\bf v})$ between the homology of $\K_{\bf u}$ and the homology of $\K_{\bf v}$.


\section{Our Distance--Quality Bifiltration}\label{sec:our_bifiltration}
\label{sec:methodology}

As in Section~\ref{sec:single_baseline}, let $D$ be a geographic domain (e.g., the city of Chicago)
and let $\cb = \{B_j\}$ be the set of census units $B_j \subseteq D$, which we treat
 as closed sets in $D$. Let $\M_i \subseteq D$ be the geographic region of the $i$th site (e.g., the $i$th park in our case study of public parks). We write $d(M_i, B_j)$ to denote an appropriate distance measure (which depends on the application) between site $M_i$ and census unit $B_j$. 
 In our resource case studies, $d(M_i, B_j)$ is the $L_1$ distance between resource site $M_i$ and census unit $B_j$. The distance measure $d(M_i, B_j)$ is an estimate of the travel distance on city streets between
 census unit $B_j$ and
resource site $M_i$. 
 For our nuisance case studies, $d(M_i, B_j)$ is the Euclidean distance 
 census unit $B_j$ 
 and
nuisance site $M_i$.

We assign a score $s(\M_i) \geq 0$ to indicate the quality or severity of each site $\M_i$. We thereby quantify heterogeneity in the resource and nuisance sites. 
For example, in our case study of public parks, a higher score indicates a higher-quality park. See Section~\ref{sec:parks}, Section~\ref{sec:landfills}, and Section~\ref{sec:pubs} for more details about how we define the score $s(\M_i)$ for parks, landfills, and pubs, respectively.

We define a function $f : \cb \times \mathbb{R}_{\geq 0} \to \mathbb{R}$ to quantify the cumulative accessibility (respectively, exposure) of the resource (respectively, nuisance) sites $\{M_i\}$ for each census unit $B_j$. For each census unit $B_j$ and each filtration-parameter value $r \in \R$, we define the {cumulative} (quality or severity) score
\begin{equation}
    f(B_j, r) := \sum_{\{\M_i \, \mid \, d(\M_i, B_j) \leq r\}} s(\M_i)\,.
\end{equation}
Informally, one can think of $f(B_j, r)$ as the total resource quality or total nuisance exposure
within radius $r$ of census unit $B_j$. 
For a fixed census unit $B_j$, the function $f(B_j,\cdot)$ is monotonically increasing, so $f(B_j,r_2)\leq f(B_j,r_1)$ for $r_2\leq r_1$.
 The {cumulative} score
 $f(B_j, r)$ is large when many sites $M_i$ are sufficiently close to $B_j$ or when $B_j$ has at least a few nearby sites with very large scores (or some combination of the two situations). 
 In Figure~\ref{fig:acre_block_scores_plot}, we show
the {cumulative} quality score (i.e., total resource quality)
 $f(B_j, r)$ for our case study of parks.

\begin{figure}
\centering
\includegraphics[width = .95\textwidth]{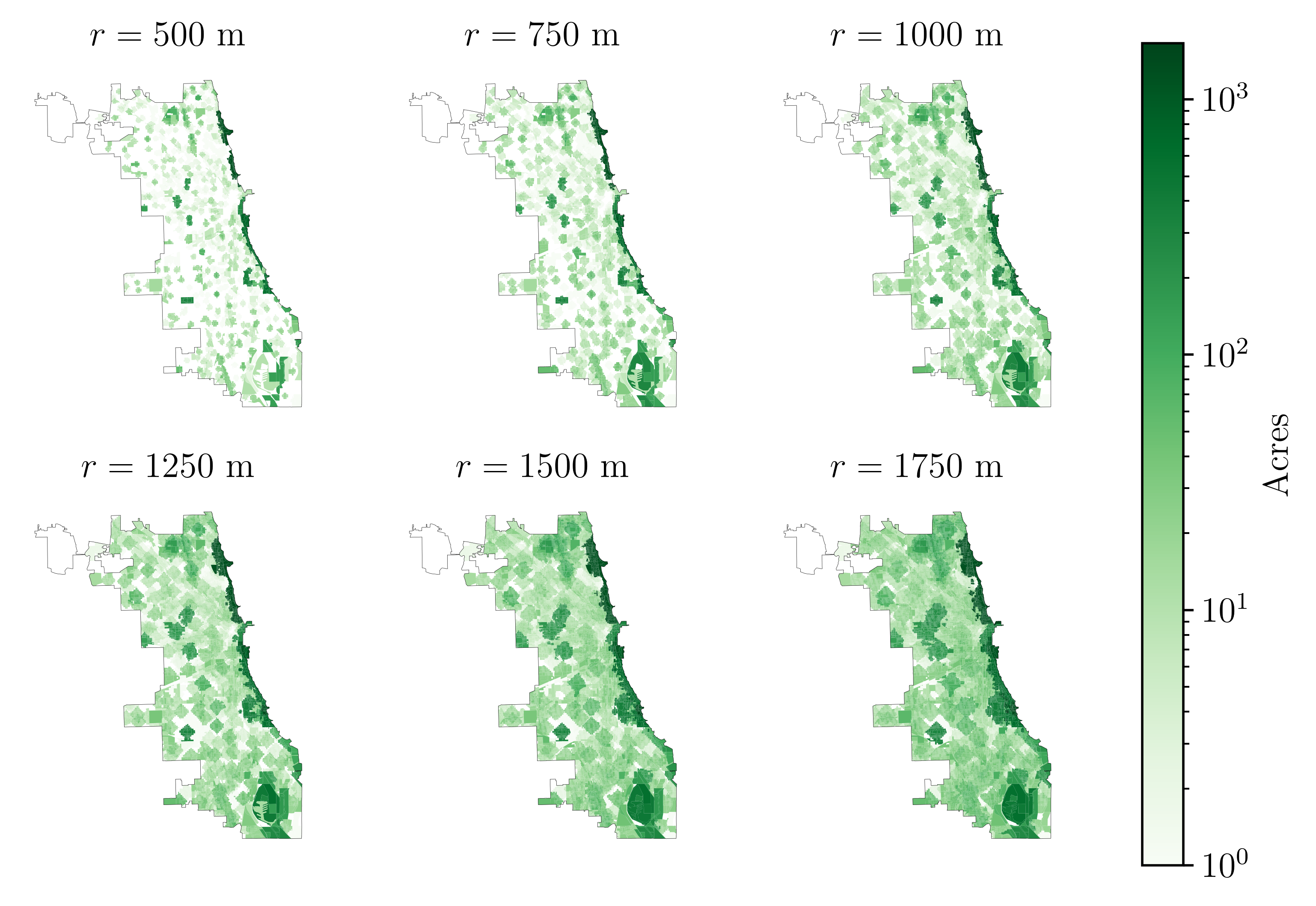}
\caption{We plot the {cumulative} quality score (i.e., total resource quality)
$f(\cdot, r): \mathcal{B} \to \R$ for $r \in \{500, 750, \ldots, 1750\}$ meters, where $\mathcal{B}$ is the set of census blocks of Chicago, the sites $\{M_i\}$ are public parks, and the score $s(M_i)$ is the acreage of the $i$th park.}
\label{fig:acre_block_scores_plot}
\end{figure}


\subsection{Bifiltration for Resource Sites}\label{sec:bifiltration_resource}

We now construct a bifiltration \linebreak $\{\X_{r, q}\}_{(r, q) \in \R^{\textnormal{op}} \times \R}$ that is indexed by the poset $\poset = \R^{\textnormal{op}} \times \R$.
 For each distance--quality pair $(r, q) \in \R^2$, we define the topological space
\begin{equation*}
    \X_{r, q} := \bigcup_{B_j  \in  \mathcal{B}_{r, q}} B_j\,,
\end{equation*}
where $\mathcal{B}_{r, q}$ is the set
\begin{equation}\label{eq:sub-level}
    \mathcal{B}_{r, q} := \{B_j \mid f(B_j, r) \leq q \}
\end{equation}
that we obtain by taking a sublevel set.
The space $\X_{r,q}$ consists of census units whose total quality within radius $r$ is at most
$q$. 
It identifies regions that fail to accumulate at least quality $q$ within the specified radius $r$. Census units that appear in $\mathcal{B}_{r, q}$ at larger $r$ values or smaller $q$ values tend to have less overall access to resource sites. 
In Figure~\ref{fig:bifiltration}, we show this bifiltration for our case study of parks.

\begin{figure}
    \includegraphics[width = .95\textwidth]{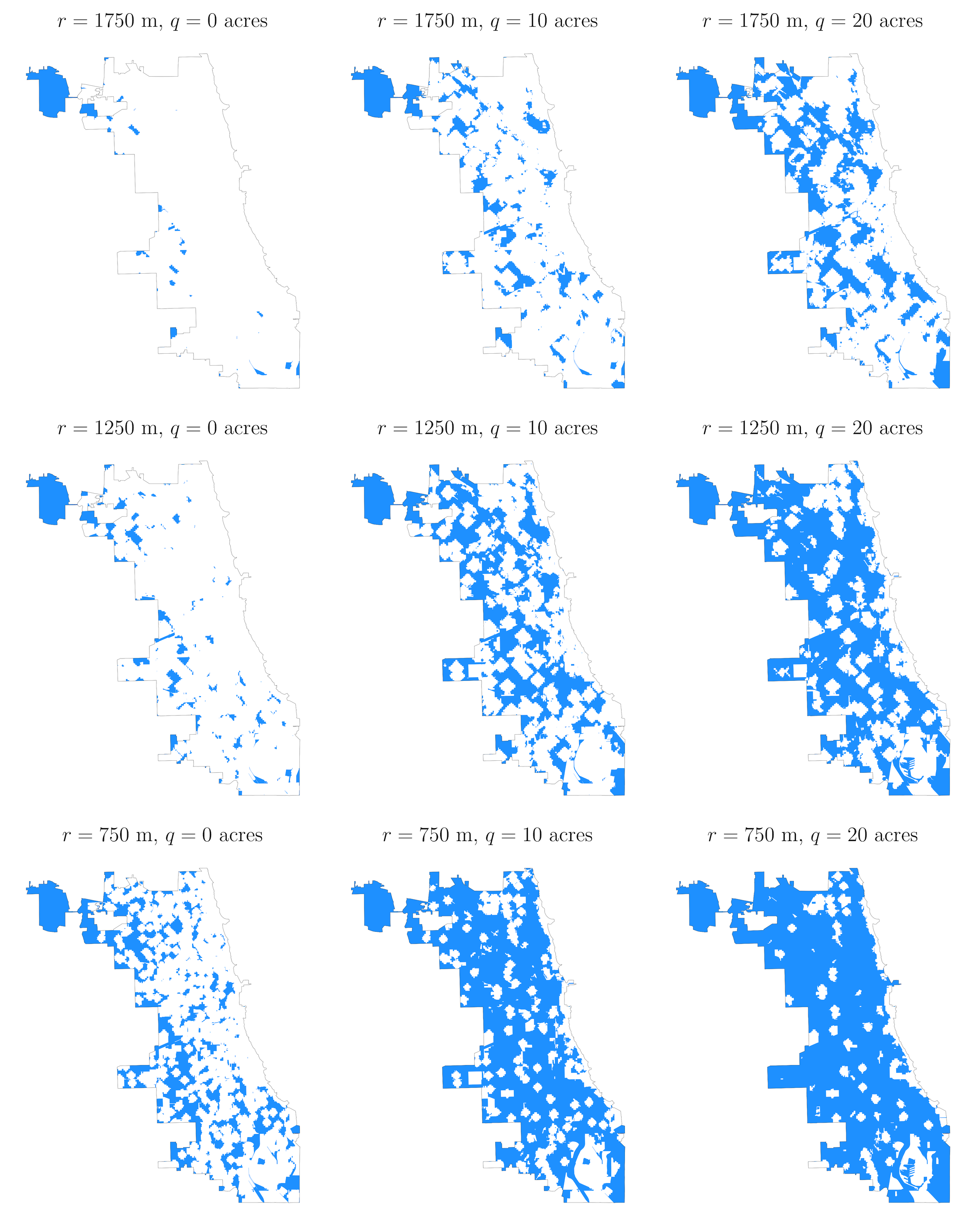}
    \caption{Our resource bifiltration for public parks in Chicago.
      In this example, $\mathcal{B}$ is the set of census blocks of Chicago, the sites $\{M_i\}$ are parks, and the quality score $s(M_i)$ is the acreage of the $i$th park. See Figure~\ref{fig:acre_block_scores_plot} for a visualization of the 
      associated {cumulative} resource quality
       $f:\mathcal{B} \times \R_{\geq 0} \to \R$.
       }
    \label{fig:bifiltration}
\end{figure}

\begin{lemma}\label{lem:bifiltration_verify} 
    The set $\{\X_{r, q}\}_{(r, q) \in \R^{\textnormal{op}} \times \R}$ is a bifiltration. That is, $\X_{r', q} \subseteq \X_{r, q'}$ if $r \leq r'$ and $q \leq q'$.
\end{lemma}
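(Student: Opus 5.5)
The argument reduces to showing $\mathcal{B}_{r',q} \subseteq \mathcal{B}_{r,q'}$ whenever $r \le r'$ and $q \le q'$. Since $\X_{r,q}$ is the union of the census units in $\mathcal{B}_{r,q}$, this containment gives $\X_{r',q} \subseteq \X_{r,q'}$ at once: every census unit in the first union also appears in the second.

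The only ingredient is that $f(B_j,\cdot)$ is monotone in $r$. Fix a census unit $B_j$ and suppose $r \le r'$. Any site $M_i$ with $d(M_i,B_j) \le r$ also satisfies $d(M_i,B_j) \le r'$, so
\begin{equation*}
\{M_i \mid d(M_i,B_j)\le r\} \subseteq \{M_i \mid d(M_i,B_j)\le r'\}.
\end{equation*}
Every score satisfies $s(M_i)\ge 0$, so adding the extra sites can only increase the sum. Hence $f(B_j,r) \le f(B_j,r')$.

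Now let $B_j \in \mathcal{B}_{r',q}$, so that $f(B_j,r') \le q$. Chaining the inequalities gives
\begin{equation*}
f(B_j,r) \le f(B_j,r') \le q \le q'.
\end{equation*}
Therefore $B_j \in \mathcal{B}_{r,q'}$, which is the desired containment.

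Finally, I would connect this containment to the poset structure. In $\R^{\textnormal{op}}\times\R$, we have $(r',q)\le(r,q')$ exactly when $r'\ge r$ and $q\le q'$. So the containment is precisely the functoriality condition $\X_{\mathbf u}\subseteq\X_{\mathbf v}$ for $\mathbf u\le\mathbf v$ required of a bifiltration in Section~\ref{sec:multi}.

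No step here is a real obstacle. The only points needing care are that the scores are nonnegative, which is what makes the sum monotone, and that the direction of the order on the first coordinate is reversed.
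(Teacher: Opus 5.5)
Your proposal is correct and follows essentially the same route as the paper: both show $\mathcal{B}_{r',q} \subseteq \mathcal{B}_{r,q'}$ via the chain $f(B_j,r) \le f(B_j,r') \le q \le q'$. You also spell out two points the paper leaves implicit. The first is that monotonicity of $f(B_j,\cdot)$ follows from $s(M_i) \ge 0$. The second is that this containment is exactly the functoriality condition for the order on $\R^{\textnormal{op}}\times\R$.
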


\begin{proof}
    For $B_j \in \mathcal{B}_{r', q}$, we have $B_j \in \mathcal{B}_{r, q'}$ because
    \begin{equation*}
        f(B_j, r) \leq f(B_j, r') \leq q \leq q'
    \end{equation*}
    {for all $r \leq r'$ and $q \leq q'$.}
\end{proof}

\subsection{Bifiltration for Nuisance Sites}\label{sec:bifiltration_nuisance}

Our bifiltration for nuisance sites is analogous to our bifiltration for resource sites, except that now $r$ is {increasing} and $q$ is decreasing. By contrast, for resource sites (see Section~\ref{sec:bifiltration_resource}), $r$ is {decreasing} and $q$ is {increasing}.
Accordingly, we now use superlevel sets of the {cumulative} score $f(\cdot, r)$ instead of sublevel sets.

We construct a bifiltration $\{\X_{r, q}\}_{(r, q) \in \R \times \R^{\textnormal{op}}}$ that is indexed by the poset $\poset = \R\times \R^{\textnormal{op}}$.
 For each distance--{severity} pair $(r, q) \in \R^2$, we define the topological space
\begin{equation*}
    \X_{r, q} := \bigcup_{B_j \in \mathcal{B}_{r, q}} B_j\,,
\end{equation*}
where $\mathcal{B}_{r, q}$ is the set
\begin{equation}\label{eq:super-level}
    \mathcal{B}_{r, q} := \{B_j \mid f(B_j, r) \geq q \}\,.
\end{equation}
The space $\mathcal{X}_{r, q}$ consists of census units whose total exposure within radius $r$ is at least $q$. 
Census units that appear in $\mathcal{B}_{r, q}$ at smaller $r$ values or larger $q$ values tend to have more overall exposure to nuisance sites.

\begin{lemma}\label{lem:bifiltration_nuisance} 
    The set $\{\X_{r, q}\}_{(r, q) \in \R \times \R^{\textnormal{op}}}$ is a bifiltration. That is, $\X_{r, q'} \subseteq \X_{r', q}$ if $r \leq r'$ and $q \leq q'$.
\end{lemma}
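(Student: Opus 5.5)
The plan mirrors the proof of Lemma~\ref{lem:bifiltration_verify}, with the inequalities reversed to match the superlevel-set definition in \eqref{eq:super-level}. Because each space $\X_{r,q}$ is the union of the census units in $\mathcal{B}_{r,q}$, it suffices to prove the containment of index sets $\mathcal{B}_{r,q'} \subseteq \mathcal{B}_{r',q}$ whenever $r \leq r'$ and $q \leq q'$. Taking unions over these census units then yields $\X_{r,q'} \subseteq \X_{r',q}$.

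The one ingredient is the monotonicity of the cumulative score in the radius. The set $\{M_i \mid d(M_i,B_j) \leq r\}$ is contained in $\{M_i \mid d(M_i,B_j) \leq r'\}$ whenever $r \leq r'$. All scores satisfy $s(M_i) \geq 0$, so adding the extra sites cannot decrease the sum. Hence $f(B_j,r) \leq f(B_j,r')$. (The paper already states this monotonicity just after defining $f$.)

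With this in hand, fix $B_j \in \mathcal{B}_{r,q'}$, which means $f(B_j,r) \geq q'$. Then
\begin{equation*}
f(B_j,r') \geq f(B_j,r) \geq q' \geq q\,,
\end{equation*}
so $B_j \in \mathcal{B}_{r',q}$, as required.

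No real obstacle arises. The only point needing care is orientation: in the poset $\R \times \R^{\textnormal{op}}$, the relation $(r,q') \leq (r',q)$ holds exactly when $r \leq r'$ and $q' \geq q$. This confirms that the displayed containment is the bifiltration condition for this indexing.
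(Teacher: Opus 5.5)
Your proof is correct and follows the paper's own argument essentially verbatim: fix $B_j \in \mathcal{B}_{r,q'}$ and chain $f(B_j,r') \geq f(B_j,r) \geq q' \geq q$ to conclude $B_j \in \mathcal{B}_{r',q}$. Your two additions are sound but not a different route. You derive monotonicity of $f(B_j,\cdot)$ from $s(M_i) \geq 0$, and you check the ordering in $\R \times \R^{\textnormal{op}}$; both make explicit what the paper leaves implicit.
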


\begin{proof}
    For $B_j \in \mathcal{B}_{r, q'}$, we have $B_j \in \mathcal{B}_{r', q}$ because
    \begin{equation*}
        f(B_j, r') \geq f(B_j, r) \geq q' \geq q\,.
    \end{equation*}
    \vspace{0.01cm}
\end{proof}


\subsection{Interpretation of 0D Persistent Homology} \label{interpret}

The $0$D PH classes represent connected components. However, we interpret the components in the resource and nuisance bifiltrations in slightly different ways.
The components in the resource bifiltration $\mathcal{X}_{r, q}$ represent regions
where the total resource quality $f(B_j, r) \leq q$ for each census unit $B_j$ in the component. That is, census units $B_j$ in the component do not have access to at least a cumulative quality $q$ of resource sites within distance $r$. Therefore, the components that appear earliest in the bifiltration (i.e., at the largest $r$ values and smallest $q$ values) indicate the regions
with the poorest access to resources.
The components of the nuisance bifiltration $\mathcal{X}_{r, q}$ represent regions that have at least a cumulative exposure $q$ within radius $r$. The components that appear earliest in the bifiltration (i.e., at the smallest $r$ values and 
largest $q$ values) have the most exposure to nuisances.
For both resources and nuisances,
 the components (i.e., $0$D PH classes) represent the most disadvantaged regions.



\subsection{Computing our Bifiltration Using an Equivalent Graph Bifiltration}
\label{ssec:equiv_graphbifiltration}

Working with the bifiltration $\{X_{r,q}\}$ directly is computationally expensive, so we instead work with an equivalent bifiltration $\{G_{r, q}\}$ of graphs. 
Each graph $G_{r, q}$ is associated with the topological space $\mathcal{X}_{r, q}$. 
The vertices of the graph $G_{r, q}$ are
\begin{equation*}
    V_{r, q} := \{v_j \mid B_j \in \mathcal{X}_{r, q} \} \,,
\end{equation*}
and the edges of $G_{r, q}$ are
\begin{equation*}
    E_{r, q} := \{(v_j, v_k) \mid \text{census units } B_j, B_k \text{ are adjacent and } B_j, B_k \in \mathcal{X}_{r, q} \}\,.
\end{equation*}
The graph $G_{r, q}$ is a combinatorially simpler object than $\mathcal{X}_{r, q}$ because it is one-dimensional 
instead of two-dimensional.

The bifiltrations $\{X_{r, q}\}$ and $\{G_{r, q}\}$ have the same $0$D PH. 
There is a one-to-one correspondence between the components of $G_{r, q}$ and the components of $\mathcal{X}_{r, q}$
through the bijection
\begin{align*}
    \alpha^{r, q}: \mathcal{B}_{r, q} &\to V_{r, q}\,, \\
    B_j &\mapsto v_j\,.
\end{align*}
The function $\alpha^{r, q}$ induces an isomorphism $\alpha^{r, q}_*: H_0(\mathcal{X}_{r, q}) \to H_0(G_{r, q})$ that maps a component $C = \bigcup_k B_{j_k}$ in $\mathcal{X}_{r, q}$ to the corresponding graph component with vertices $\{v_{j_k}\}_k$. 
Each isomorphism $\alpha^{r, q}_*$ also induces
an isomorphism for 0D PH.

\begin{lemma}
    The $0$D PH
     of the bifiltration $\{\mathcal{X}_{r, q}\}$ is isomorphic to the $0$D PH
     of the bifiltration $\{G_{r, q}\}_{r, q}$.
\end{lemma}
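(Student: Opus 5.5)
The plan is to exhibit, for every index $(r,q)$, an isomorphism $\alpha^{r,q}_*: H_0(\X_{r,q}) \to H_0(G_{r,q})$, and then show that these isomorphisms commute with the maps induced by the inclusions of the bifiltration. Such a family is a natural isomorphism of persistence modules indexed by $\R^{\textnormal{op}} \times \R$ (or by $\R \times \R^{\textnormal{op}}$ in the nuisance case). I will argue both cases simultaneously. The only fact about the index poset that I need is the one already established in Lemma~\ref{lem:bifiltration_verify} (resp. Lemma~\ref{lem:bifiltration_nuisance}): if $\mathbf{u} \le \V$, then $\mathcal{B}_{\mathbf{u}} \subseteq \mathcal{B}_{\V}$. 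Consequently $V_{\mathbf{u}} \subseteq V_{\V}$ and $E_{\mathbf{u}} \subseteq E_{\V}$, so $\{G_{r,q}\}$ is also a bifiltration of graphs.

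\emph{Step 1 (pointwise isomorphism).} Fix $(r,q)$. I will show that two census units $B_j, B_k \in \mathcal{B}_{r,q}$ lie in the same path component of $\X_{r,q}$ if and only if $v_j$ and $v_k$ lie in the same component of $G_{r,q}$.

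For the direction from the graph to the space, an edge path $v_j = v_{j_0}, \dots, v_{j_m} = v_k$ gives a chain of adjacent closed units. Each unit is path-connected, since a census unit is a connected polygon, and consecutive units intersect. Concatenating paths therefore gives a path in $\X_{r,q}$.

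For the converse, I fix a path $\gamma:[0,1]\to \X_{r,q}$ from a point of $B_j$ to a point of $B_k$. Because $\mathcal{B}$ is finite and the $B_i$ are closed, the sets $\gamma^{-1}(B_i)$ form a finite closed cover of $[0,1]$. A standard connectedness argument on $[0,1]$ then yields a sequence of units $B_{i_0}=B_j, \dots, B_{i_m}=B_k$, all in $\mathcal{B}_{r,q}$, with consecutive members intersecting. The argument runs as follows. Let $I$ be the set of indices reachable from $j$ through intersecting units. The union $A$ of the preimages of the units indexed by $I$ is closed, and so is the union of the remaining preimages. These two sets cover $[0,1]$ and are disjoint unless some unit indexed by $I$ meets a unit outside $I$, which would contradict the definition of $I$. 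Hence $A=[0,1]$. Identifying ``intersecting'' with ``adjacent'' gives a path in $G_{r,q}$.

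This shows that $\alpha^{r,q}$ induces a bijection on components, and hence an isomorphism $\alpha^{r,q}_*$ on $H_0$ with field coefficients.

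\emph{Step 2 (naturality).} For $\mathbf{u} \le \V$, the inclusion $\X_{\mathbf{u}} \hookrightarrow \X_{\V}$ sends the component containing $B_j$ to the component containing $B_j$. Likewise, $G_{\mathbf{u}} \hookrightarrow G_{\V}$ sends the component of $v_j$ to that of $v_j$. Because $\alpha$ is just $B_j \mapsto v_j$ at every index, the square $\iota^G_* \circ \alpha^{\mathbf{u}}_* = \alpha^{\V}_* \circ \iota^{\X}_*$ commutes on the basis of components, and therefore on all of $H_0$. A natural transformation whose components are all isomorphisms is an isomorphism of persistence modules, which proves the lemma.

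The main obstacle is Step 1, specifically the precise meaning of ``adjacent''. The argument requires that adjacency coincide with nonempty intersection, which includes units that meet only at a corner point (queen contiguity). If the paper intends only shared boundary edges (rook contiguity), the statement can fail, because $\X_{r,q}$ may be connected through a single vertex while $G_{r,q}$ is not. I will therefore state explicitly that adjacency means $B_j \cap B_k \neq \emptyset$. I will also assume that each census unit is path-connected; otherwise vertices should be replaced by components of units.
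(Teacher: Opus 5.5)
Your proof is correct and follows the same route as the paper: a levelwise isomorphism $\alpha^{r,q}_*$ induced by $B_j \mapsto v_j$, together with the check that these maps commute with the inclusion-induced maps because $\alpha$ fixes the label $j$ at every index. Your Step 1 adds a justification of the levelwise bijection on components, which the paper only asserts, and your caveat is well taken: that step needs ``adjacent'' to mean $B_j \cap B_k \neq \emptyset$ and each census unit to be connected, hypotheses the paper leaves implicit.
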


\begin{proof}
    Consider the map $\alpha^{r, q} : \mathcal{B}_{r, q} \to V_{r, q}$ with
    $B_j \mapsto v_j$, and let $\alpha^{r, q}_*: H_0(\mathcal{X}_{r, q}) \to H_0(G_{r, q})$ be the induced isomorphism on $0$D homology for each distance--quality pair $(r, q)$. Let $\iota^{\mathcal{X}}:\mathcal{X}_{r, q} \hookrightarrow \mathcal{X}_{r', q'}$ be an inclusion, and let $\iota^G: G_{r, q} \hookrightarrow G_{r', q'}$ be the corresponding inclusion of graphs. The bijections $\alpha^{r, q}$ commute with the inclusions. That is, if $B_j$ is a census unit in $\mathcal{X}_{r, q}$, then
    \begin{equation*}
        \iota^G \circ \alpha^{r, q} (B_j) = v_j = \alpha^{r', q'} \circ \iota^{\mathcal{X}}(B_j)\,.
    \end{equation*}
    Therefore, the isomorphisms $\alpha^{r, q}_*$ on homology commute with the homomorphisms $\iota^{\mathcal{X}}_* : H_0(\mathcal{X}_{r, q}) \to H_0(\mathcal{X}_{r', q'})$ and $\iota^G_* : H_0(G_{r, q}) \to H_0(G_{r', q'})$ that are induced by the inclusions $\iota^{\mathcal{X}}$ and $\iota^G$, respectively. Consequently, $\{\alpha^{r, q}_*\}$ is an isomorphism of the 0D PH.
\end{proof}


\subsection{Biparameter Component Aggregation}\label{sec:aggregation}

A $0$D PH class in a distance--quality bifiltration is a collection of connected components of the following form.

\begin{definition}
    A \emph{$0$D PH class} of a distance--quality bifiltration is a set $\{C_{r,q}\}_{r, q}$ of connected components $C_{r,q} \subseteq \K_{r, q}$ such that $C_{r, q} = C_{r', q'} \cap \mathcal{X}_{r, q}$ whenever $\mathcal{X}_{r, q} \subseteq \mathcal{X}_{r', q'}$. 
\end{definition}
In Figure~\ref{fig:ph_class}, we show an example of a $0$D PH class.

\begin{figure}
\centering
    \includegraphics[width = \textwidth]{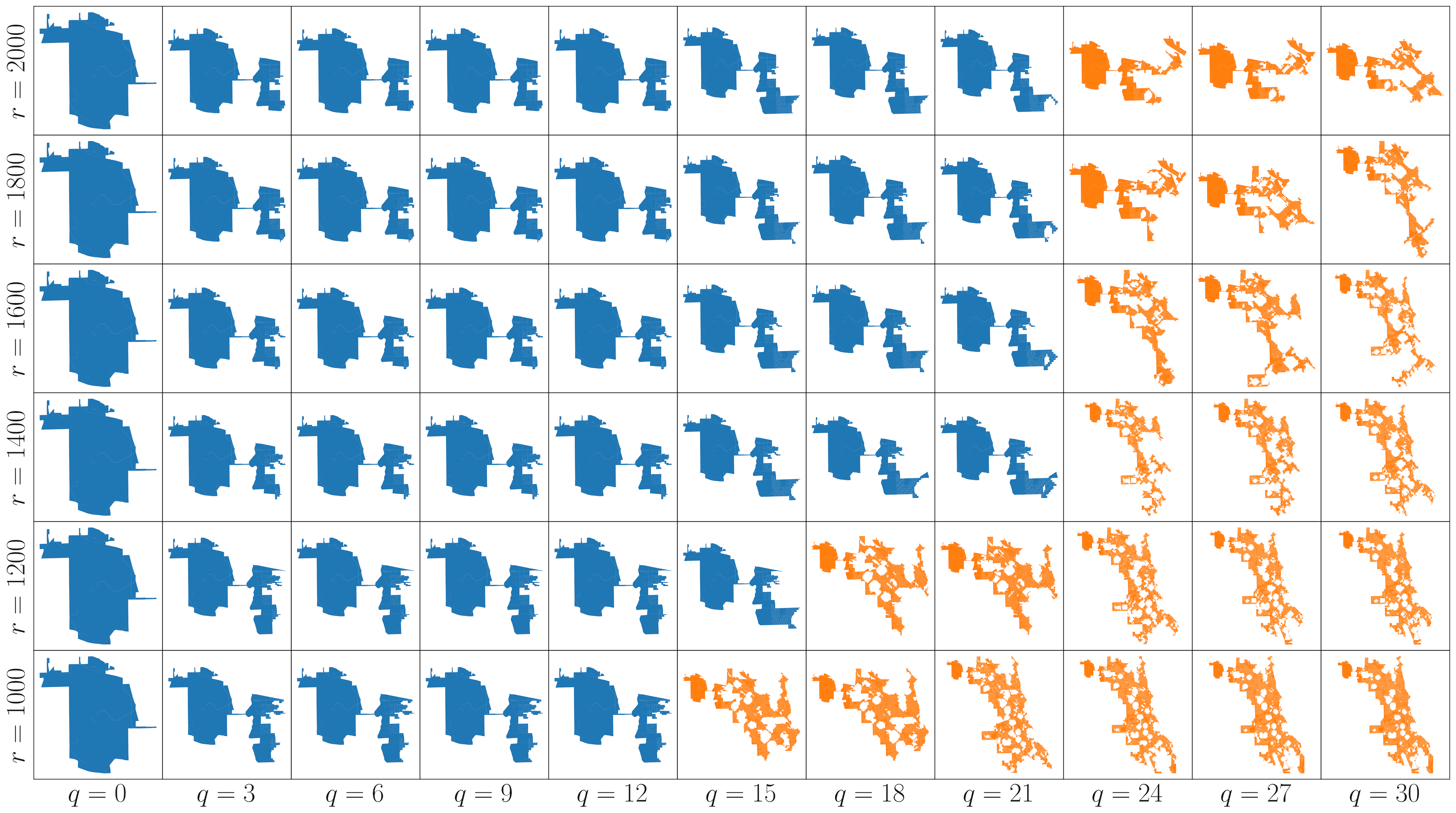}
    \caption{A $0$D PH class of a resource bifiltration $\{\mathcal{X}_{r, q}\}$. There is a connected component $C_{r, q}$ for each distance--quality pair $(r, q)$. In this example, $\mathcal{B}$ is the set of Chicago census blocks, the resource sites $\{M_i\}$ are parks, and the score $s(M_i)$ is the acreage of the $i$th park.
    The blue subplots indicate that $C_{r,q}$ is smaller than the chosen threshold (in this case,  1000 blocks). Orange subplots indicate $C_{r,q}$ is larger than the threshold. 
    }
    \label{fig:ph_class}
\end{figure}


\subsubsection{Truncated PH Classes}\label{truncate}

At late stages of a bifiltration, which entails small $r$ values and large $q$ values for a resource bifiltration and large $r$ values and small $q$ values for a nuisance bifiltration, a connected component $C_{r, q}$ can occupy a very large portion of the total 
area of a region.
 For example, in Figure~\ref{fig:ph_class}, the component $C_{r, q}$ includes more than $1000$ census blocks of Chicago for several values of the distance--quality pair $(r, q)$. By $(r, q) = (1200, 24)$, the component $C_{r, q}$ includes nearly all of Chicago. Such components are too large to be meaningful, so we define a ``truncated'' version of a PH class in which one only considers components with at most a specified number of census units.
 
\begin{definition}
    A \emph{truncated PH class} is a set $\{C_{r, q}\}_{r, q}$ such that there is a PH class $\{C'_{r, q}\}_{r, q}$ with
    \begin{equation*}
    C_{r, q} := 
        \begin{cases}
            C'_{r, q}\,, &\text{if } |C'_{r, q}| \leq S \\
            \emptyset \,, &\text{otherwise}
        \end{cases}
    \end{equation*}
    for some fixed
    size threshold $S$. We say that $\{C_{r,q}\}_{r,q}$ is a \emph{truncation} of $\{C'_{r,q}\}_{r,q}$.
\end{definition}
Alternatively, one can threshold the geographic area of the component $C_{r, q}$, rather than the number of census units.

In Figure~\ref{fig:truncated}, we show a truncated PH class in a bifiltration in Chicago with $S = 1000$.

\begin{figure}
    \centering
    \includegraphics[width = \textwidth]{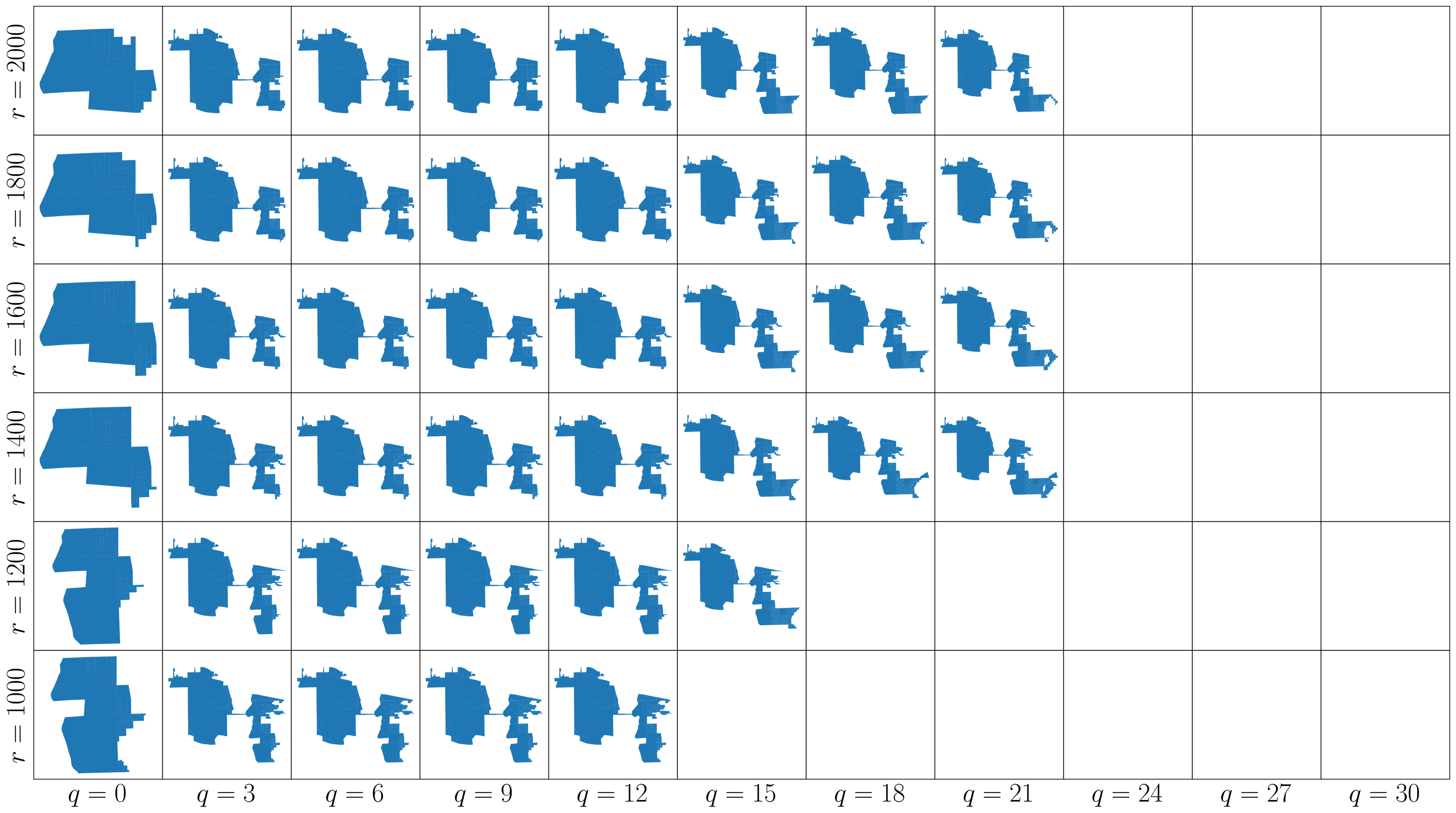}
    \caption{A $0$D truncated PH class of a resource bifiltration $\{X_{r, q}\}$ for public parks in Chicago. The quality score of each park is its acreage.
    The largest component has 
    $S = 1000$ census blocks. 
    This truncated PH class 
    {is not exactly the same as}
    the truncation of the PH class in Figure~\ref{fig:ph_class}, although {it is very similar to it}. {See the discussion in Section~\ref{similar}.}
    }
    \label{fig:truncated}
\end{figure}


\subsubsection{Algorithm to Aggregate Truncated PH Classes} \label{similar}
Multiple distinct truncated PH classes $\{C^1_{r,q}\}_{r,q}, \ldots, \{C^m_{r,q}\}_{r,q}$ can have very similar components. 
It is common
that $C^i_{r,q} = C^j_{r,q}$, especially at later stages of a bifiltration (i.e., small $r$ and large $q$ for resources and large $r$ and small $q$ for nuisances). For example, the truncated PH class in Figure~\ref{fig:truncated} is almost 
the same as the truncation of the PH class in Figure~\ref{fig:ph_class}; their components are the same except at radius $r = 0$. 
Intuitively, these two PH classes represent roughly ``the same'' hole in resource coverage.

Multiple truncated PH classes having the same components at later stages of a bifiltration is a common phenomenon. 
Early in a bifiltration
(i.e., when $r$ is large and $q$ is small for resources or when $r$ is small and $q$ is large for nuisances), the connected components of $\mathcal{X}_{r, q}$ tend to have very few census units. 
Many of these small components merge with
each other over the course of only a few steps in $r$ and/or $q$.
To handle such situations, we algorithmically
 {aggregate} truncated PH classes that are ``almost'' the same in some sense.

Our aggregation procedure (see Algorithm~\ref{alg:aggregation_procedure}) relies on pairwise comparisons of the ``terminal'' components of each truncated PH class. In our aggregation algorithm, we seek to merge truncated PH classes with the same or similar terminal components.

\begin{definition}
    Let $\{C_{r,q}\}_{r,q}$ be a truncated PH class. Its \emph{set of terminal parameter pairs} $\mathcal{T}$
    are the values of the distance--quality pair $(r, q)$ 
    for which $C_{r, q + 1} = \emptyset$. Its associated \emph{terminal components} are the components $C_{r,q}$ in which each $(r, q)\in \mathcal{T}$ is a terminal parameter pair.
\end{definition}
In Figure~\ref{fig:truncated}, the last component in each row is a terminal component and the values of the corresponding distance--quality pairs $(r,q)$ are the terminal parameters.

\begin{algorithm}
    \caption{Our algorithm to aggregate truncated PH classes.
    }
    \label{alg:aggregation_procedure}
    \begin{algorithmic}[1] 
    \Require Threshold $S$, bifiltration $\{\mathcal{X}_{r,q}\}_{r,q}$
    \Ensure A set $\{\mathcal{C}_{L_k}^\textnormal{agg}\}$ of aggregated PH classes    
        \State Calculate all distinct truncated PH classes $\{C_{r,q}^1\}_{r,q}, \ldots, \{C_{r,q}^m\}_{r,q}$
        
        \State Initialize a graph $H$ with a vertex $w_i$ for each $\{C_{r,q}^i\}_{r,q}$

        \For{each pair $(w_i, w_j)$ of vertices in $H$}
            \State Let $T_i$ and $T_j$ be the sets of terminal parameter pairs for $\{C_{r,q}^i\}_{r,q}$ and $\{C_{r,q}^j\}_{r,q}$, respectively

            \State Let $N_i$ and $N_j$ be the numbers of $(r, q)$ terminal parameter pairs in $T_i$ and $T_j$, respectively, with
            $C_{r,q}^i = C_{r,q}^j$

            \If{$N_i \geq .5 |T_i|$ or $N_j \geq .5|T_j|$}
                \State Add the edge $(w_i, w_j)$ to $H$
            \EndIf
        \EndFor

        \State Calculate the connected components $K_1, \ldots, K_n$ of $H$

        \For{each component $K_i$}
            \State Calculate the provisional aggregated PH class 
            $\widetilde{\mathcal{C}}_i^\textnormal{agg} := \{\widetilde{C}^{\textnormal{agg}}_{r,q}\}_{r,q}$ for the truncated PH classes $\{C^{i_1}_{r, q}\}_{r, q}, \ldots, \{C^{i_\ell}_{r, q}\}_{r, q}$, where $\{w_{i_j}\}_{j=1}^\ell$ are the vertices of $K_i$
        \EndFor

        \For{each $\widetilde{\mathcal{C}}_i^\textnormal{agg}$}
            \State Calculate the maximal component $\widetilde{\mathcal{M}}_i$
        \EndFor

        \State Initialize a graph G with a vertex $v_i$ for each $\{\widetilde{\mathcal{M}}_i\}_i$

        \For{{each pair $(v_i, v_j)$ in $G$}}
        \If{{if $|\widetilde{\mathcal{M}}_i| \cap |\widetilde{\mathcal{M}}_j| \geq 0.9|\widetilde{\mathcal{M}}_i|$ or $|\widetilde{\mathcal{M}}_i| \cap |\widetilde{\mathcal{M}}_j| \geq 0.9|\widetilde{\mathcal{M}}_j|$}}
        \State{Add the edge $(v_i,v_j)$ to $G$}
        \EndIf
        \EndFor

        \State{Calculate the connected components $L_1, \ldots, L_m$} of $G$

        \For{{each component $L_k$}}
        \State{Merge $\widetilde{\mathcal{C}}_{k_1}^\textnormal{agg}, \ldots, \widetilde{\mathcal{C}}_{k_\ell}^\textnormal{agg}$ where $\{v_{k_i}\}$ are the vertices of $K_i$ into a final aggregated PH class $\widetilde{\mathcal{C}}_{L_k}^{\textnormal{agg}}$}
        \EndFor
        
        \State \Return The set $\{\mathcal{C}_{L_k}^\textnormal{agg}\}$ of aggregated PH classes
    \end{algorithmic}

\end{algorithm}

We now give the details of our aggregation procedure, which we present as pseudocode in Algorithm~\ref{alg:aggregation_procedure}. We first set a threshold $S$ and calculate all distinct truncated PH classes $\{C_{r,q}^1\}_{r,q}, \ldots, \{C_{r,q}^m\}_{r,q}$ for the threshold $S$. We then construct a graph $H$ that has
a vertex $w_i$ for each truncated PH class $\{C_{r,q}^i\}_{r,q}$. We add an edge $(w_i, w_j)$ if either $C_{r,q}^i = C_{r,q}^j$ for more than half of the terminal parameter pairs $(r, q)$ of 
$\{C_{r,q}^i\}_{r,q}$ or $C_{r,q}^i = C_{r,q}^j$ for at least half of the terminal parameter pairs of $\{C_{r,q}^j\}_{r,q}$. 
The two conditions are not equivalent because the $i$th and $j$th truncated PH classes 
usually
have different sets and different numbers of terminal parameter pairs.
If a set $\{w_i\}$  of vertices forms a connected component of $H$, we merge their corresponding truncated PH classes to make an \emph{aggregated truncated PH class}.
For simplicity, we will simply refer to this as an \emph{aggregated PH class}.
Performing this merging procedure
for each component of $H$ yields a provisional set of aggregated PH classes.

\begin{definition}
    Let $\{C^{i_1}_{r, q}\}_{r, q}, \ldots, \{C^{i_\ell}_{r, q}\}_{r, q}$ be a set of $\ell$ truncated PH classes. Their \emph{aggregated PH class} is the set $\{C^\textnormal{agg}_{r, q}\}_{r, q}$, with
    \begin{equation*}
        C^\textnormal{agg}_{r, q} := \bigcup_{j = 1}^\ell C^{i_j}_{r, q}\,.
    \end{equation*}
\end{definition}
Figure~\ref{fig:aggregated_truncated_ph_classes} shows an example of an aggregated PH class.

\begin{figure}
    \centering
    \includegraphics[width=\linewidth]{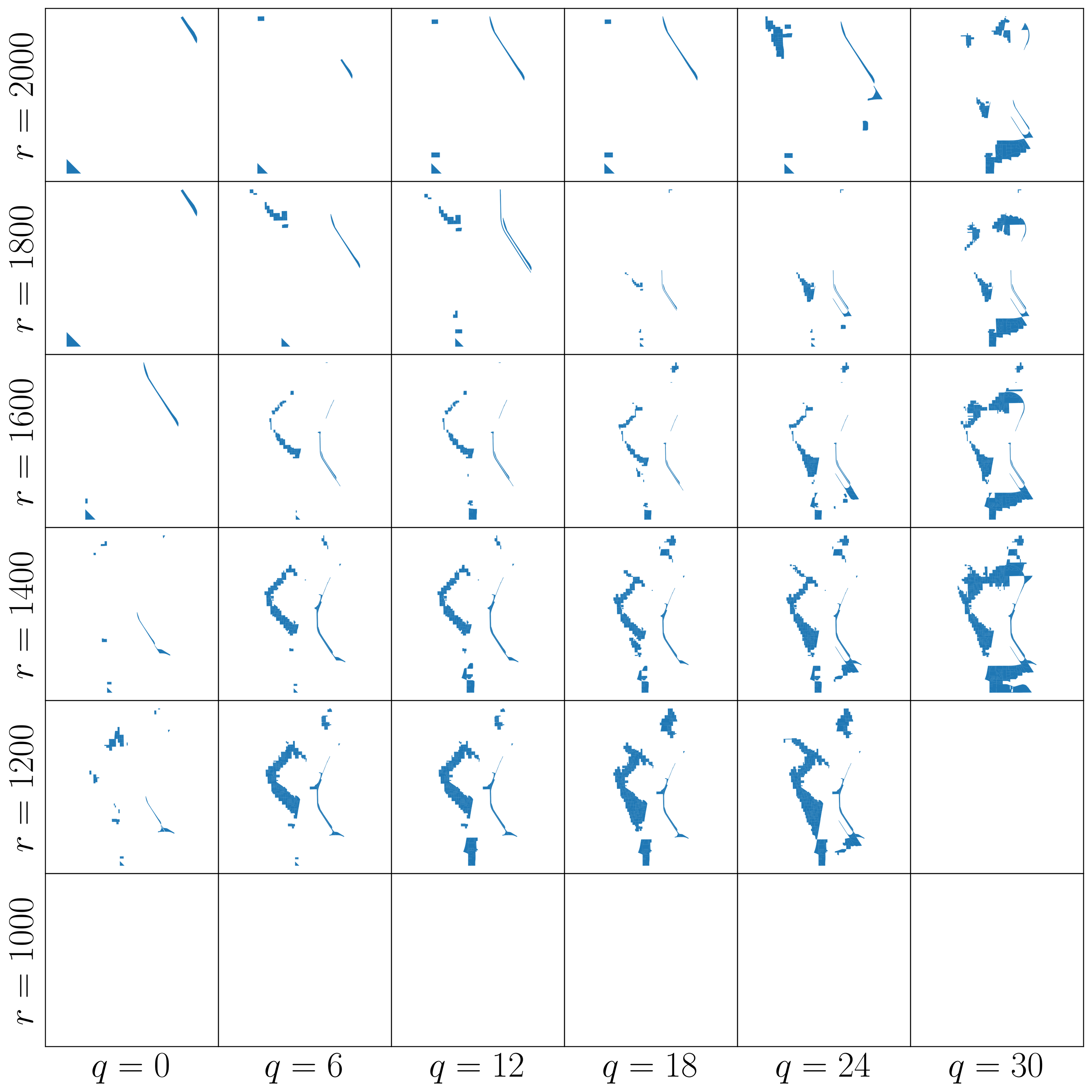}
    \caption{ An aggregated PH class $\mathcal{C}^\textnormal{agg}$ for the resource bifiltration that we construct for {our case study of public parks in Chicago.} At many values of the {distance--quality pair $(r, q)$,} the set $\mathcal{C}^\textnormal{agg}_{r, q}$ has multiple components.    }
    \label{fig:aggregated_truncated_ph_classes}
\end{figure}

Algorithm~\ref{alg:aggregation_procedure} includes a final step in which we merge some of the provisional aggregated PH classes. We do these merges by extracting a single \emph{maximal component} for each provisional aggregated PH class and then comparing these maximal components.

\begin{definition}
    Let $\{C^\textnormal{agg}_{r,q}\}_{r,q}$ be an aggregated PH class with terminal parameter pairs $\mathcal{T}$. The \emph{maximal component} is $\bigcup_{(r, q) \in \mathcal{T}} C^\textnormal{agg}_{r,q}$\,.
\end{definition}

The maximal component of a provisional aggregated PH class $\widetilde{\mathcal{C}}_1^{\textnormal{agg}}$ 
sometimes has significant overlap with the maximal component of another aggregated PH class $\widetilde{\mathcal{C}}_2^{\textnormal{agg}}$. This situation can arise when the terminal parameter pairs of $\widetilde{\mathcal{C}}_1^{\textnormal{agg}}$ are disjoint from those of 
$\widetilde{\mathcal{C}}_2^{\textnormal{agg}}$. See Figure~\ref{fig:final_merge_step} for an example where the maximal component of one aggregated PH class is contained completely within the maximal component of another aggregated PH class. 
In situations where the overlap surpasses some threshold, we merge the aggregated PH classes.  
More precisely, if 90\% of one maximal component is contained within another maximal component, we merge the aggregated PH classes $\widetilde{\mathcal{C}}_1^{\textnormal{agg}}$ and $\widetilde{\mathcal{C}}_2^{\textnormal{agg}}$ into a larger, final aggregated PH class $\mathcal{C}^{\textnormal{agg}}$, with
\begin{equation*}
    \mathcal{C}^{\textnormal{agg}}_{r,q} = (\widetilde{\mathcal{C}}_1^{\textnormal{agg}})_{r,q} \cup (\widetilde{\mathcal{C}}_2^{\textnormal{agg}})_{r,q}\,.
\end{equation*}
In order to ensure the order of merging does not effect the outcome, we do this by identifying sets of maximal components of aggregated PH classes where they satisfy the 90\% overlap condition with at least one other maximal component in the set. 
We then merge all aggregated PH classes in the set. 
This does lead to regions larger than the truncation threshold, $S$. 
In theory, it could lead to very large regions, but in practice we find the merged components to be fairly localized. 
Completing this final merging step yields the final aggregated PH classes.

\begin{figure}
    \centering
    \includegraphics[width=0.95\linewidth]{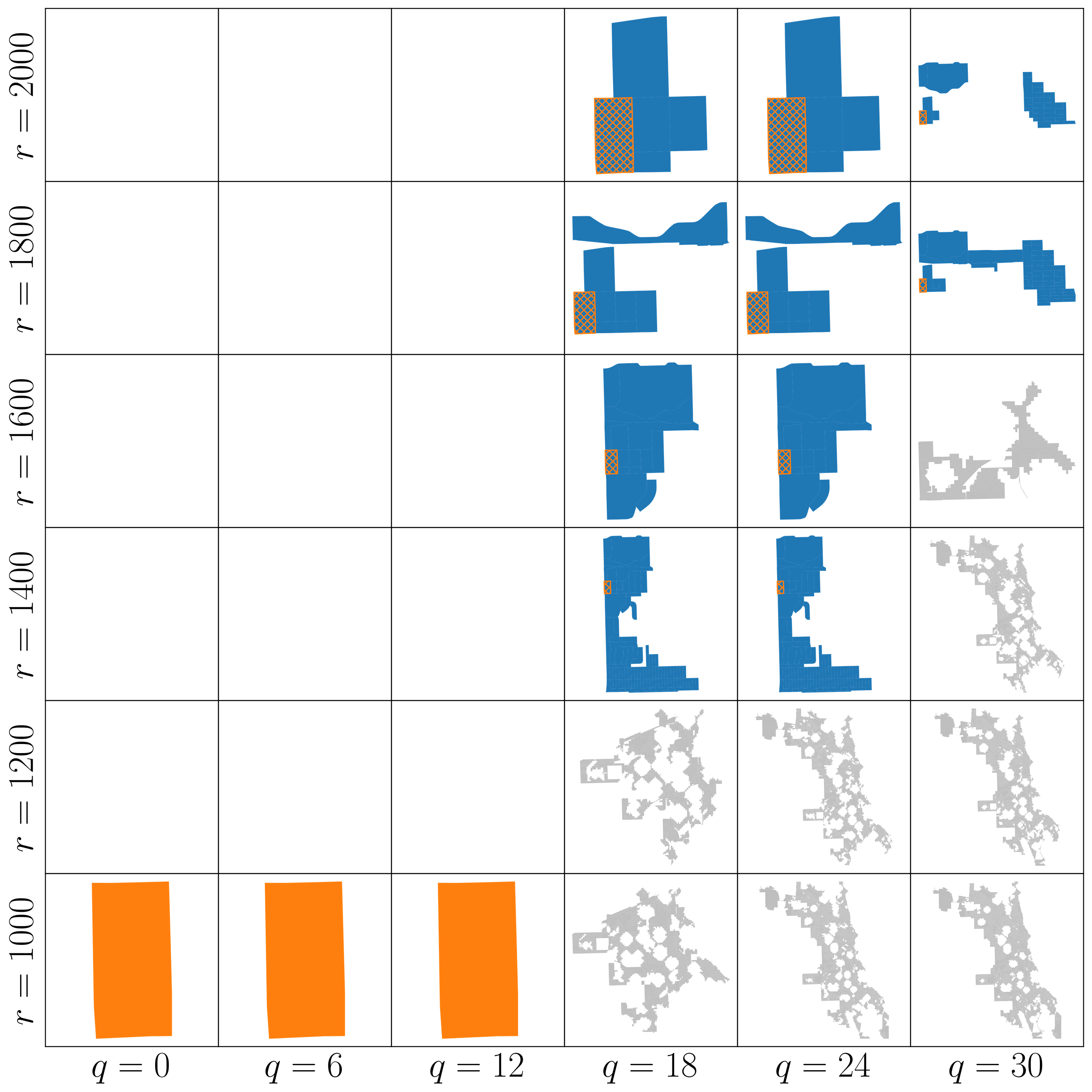}
    \caption{
    We show an example of PH classes from our case study to parks that are merged in the last step of Algorithm~\ref{alg:aggregation_procedure} (Lines 17--21) due to the fact that their terminal parameter pairs are disjoint.
    In this figure, there are two PH classes shown, one drawn in orange and one drawn in blue. 
    The grey subplots indicate those where $C_{r,q}$ is larger than the preset threshold (set to 1000 blocks) and the orange crosshatch region shown on the blue PH class is the block from the orange PH class. 
    Because the maximal component for the orange PH class is a subset of the maximal component from the blue PH class, so we merge the PH classes in the final step of Algorithm \ref{alg:aggregation_procedure}. 
     }
    \label{fig:final_merge_step}
\end{figure}


\subsubsection{The Lifetime of an Aggregated PH Class} \label{sec:range}

To quantify the {persistence of} aggregated PH classes, we define the \emph{lifetime} of an aggregated PH class. {Our notion of a lifetime is inspired by the notion of persistence of one-parameter PH classes \cite{otter2017}.} 
In {one-parameter PH}, persistence quantifies the length of {an interval that a PH class exists} before it merges into another class. 
{Analogously,} the lifetime of an aggregated PH class in {a bifiltration} measures the area {that an aggregated PH class occupies in the distance--quality plane of aggregated PH class, subject to the condition (see Section \ref{truncate}) that it is not too large.}

The lifetime {(see the definition below)} of an aggregated PH class 
{measures} the proportion of {bifiltration-parameter values that the class is nonempty and not too large.}
For each bifiltration, we {construct} a finite grid $\{r_1, \ldots, r_m\}\times \{q_1, \ldots, q_m\}$ of {bifiltration-parameter} values, which we use to calculate the lifetime of each aggregated PH class in that bifiltration.

\begin{definition}
    Let $\{r_1, \ldots, r_m\}\times \{q_1, \ldots, q_m\}$ be parameter values 
    in a bifiltration.
    The \emph{lifetime} (with respect to those parameters) of an aggregated PH class $\mathcal{C}^\textnormal{agg}$ is
    \begin{equation*}
        \frac{\vert\{(r_i, q_j) \mid \mathcal{C}^\textnormal{agg}_{r_i, q_j} \neq \emptyset\}\vert}{m^2}\,.
    \end{equation*}
\end{definition}


\section{Our Case Studies}\label{sec:results}

We compute aggregated PH classes and use a bifiltration to determine maximal components for the public parks, landfills, and bars and pubs. 
See Table~\ref{tab:parameters} for details about the parameter values that we use in our case studies.

\begin{table}[]
    \centering
    \begin{tabular}{ | m{7.4em} | m{9em}| m{9em} | m{6.5em} | } 
        \hline 
        \textbf{Sites} & \textbf{Distance ($r$)} & \textbf{Score ($q$)} & \textbf{Size Threshold ($S$)} \\
        \hline 
        Parks (resource) & 2000, 1800, 1600, 1400, 1200, 1000 meters & 0, 6, 12, 18, 24, 30 acres & 1000 blocks\\
        \hline 
        Landfills (nuisance) & 10, 15, 20, 25, 30, 35, 40 kilometers & 5, 15, 25, 35, 45, 55 short tons of waste & 10\% of the area\\
        \hline 
        Pubs (resource) & 2000, 1700, 1400, 1100,  800,  500 meters & 400,  700, 1000, 1300, 1600, 1900 stars & 20\% of the area \\
        \hline 
        Pubs (nuisance) & 500,  800, 1100, 1400, 1700, 2000 meters & 1000, 1200, 1400, 1600, 1800, 2000 reviews & 20\% of the area \\
        \hline 
    \end{tabular}
    \caption{
    The parameter values that we use for our case studies.}
    \label{tab:parameters}
\end{table}

The maximal components represent holes in coverage of resources and high-exposure regions for nuisances. 
They thus indicate disadvantaged regions.
A maximal component with a longer lifetime indicates that a region has
 less coverage or more exposure 
 within a geographically 
local
 area.
We interpret the maximal components with the longest lifetimes as the most problematic regions, either in terms of insufficiency of resources or overexposure to nuisances. Therefore, we focus on the maximal components with the longest lifetimes.

For each case study, the bifiltration $\X_{r,q}$ depends on the function $f(B_j,r)$, which in turn depends on the score $s(\M_i)$ and the distance $d(B_j,M_i)$. 
We customize each score to account for relevant features of each resource or nuisance.
In Table~\ref{tab:summary}, we summarize our distance measures $d(B_j,M_i)$ and quality/severity scores $s(\M_i)$ for our case studies.
We discuss our case study of public parks in Section~\ref{sec:parks}, our case study of landfills in Section~\ref{sec:landfills}, and our case study of bars and pubs in Section~\ref{sec:pubs}.

\begin{table}  
\centering 
\begin{tabular}{|l |l| l|}
    \hline 
    \textbf{Sites} & \textbf{Distance} & \textbf{Score}  \\ 
    \hline
     Parks (resource) & $L_1$ & acres \\ 
     \hline
     Landfills (nuisance) & $L_2$ & amount of waste (in short tons) \\ 
     \hline
     Pubs (resource) & $L_1$ & total stars \\ 
     \hline
     Pubs (nuisance) & $L_2$ & number of ratings  \\ 
     \hline
\end{tabular}
\caption{Summary of the examined resources and nuisances, the distances $d(B_j,M_i)$,
and the scores $s(M_i)$. }
\label{tab:summary}
\end{table}


\subsection{Public Parks in Chicago}\label{sec:parks}  

In our first case study, we examine public parks in Chicago. The geographic region $D$ is the city of Chicago, the set $\mathcal{B}$ of geographic units is the set of census blocks, and the sites $\{M_i\}$ are public parks.


\subsubsection{Score} 

One can determine the quality score of a park in many different ways\footnote{For example, see  \cite{wu2026ecosystemservicedemand} for additional important park features that one can use to determine quality.},
and how one measures park quality has a major
effect on our qualitative results.
For simplicity,
in our case study, we measure quality as the total acreage of a park.\footnote{See our Github repository to explore other notions of park quality \cite{ourgithub}. 
}
In Figure~\ref{fig:park_scores}, we show the parks and color them by their sizes.

\begin{figure}
    \centering
    \includegraphics[width=0.48\linewidth]{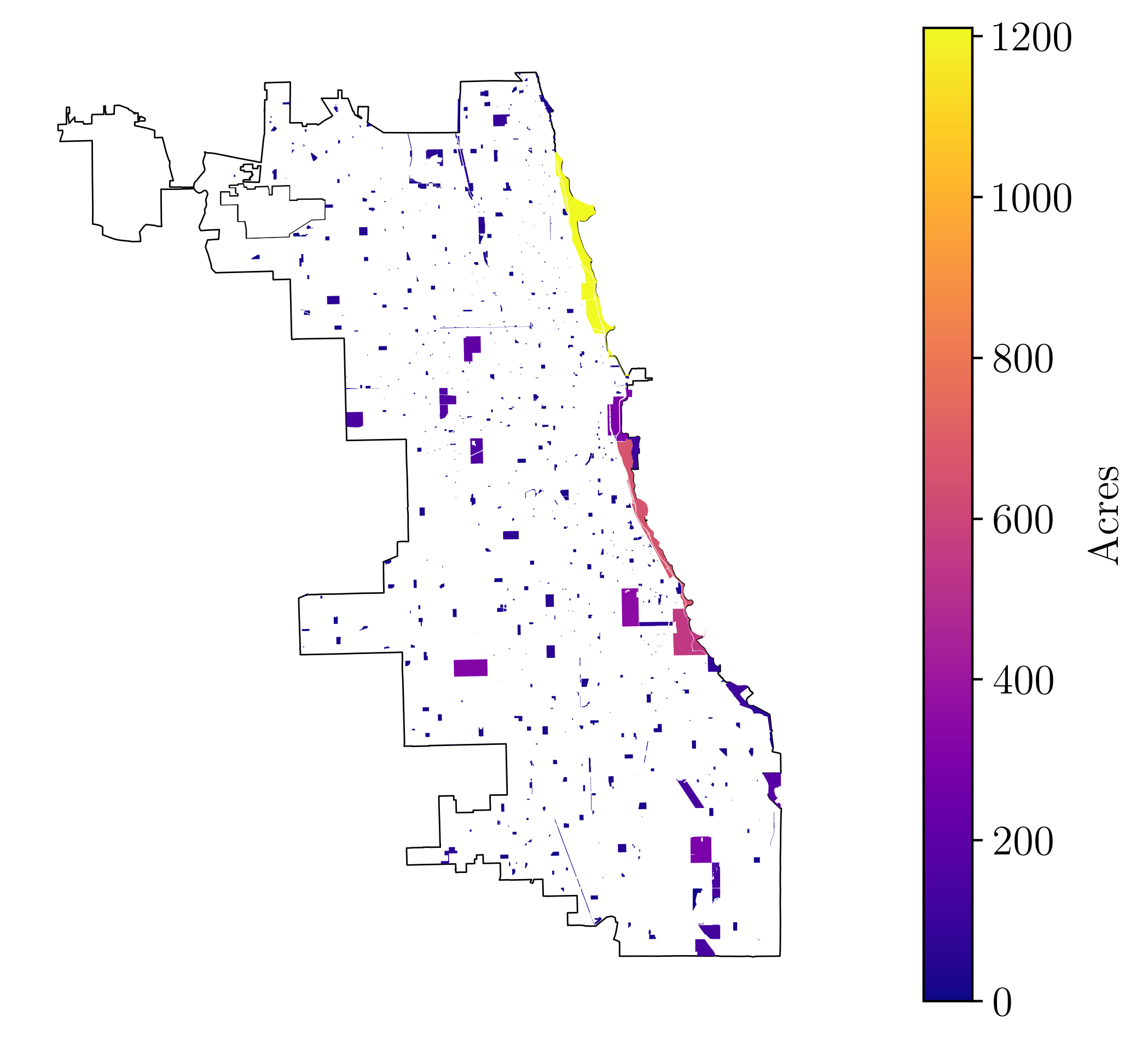}
    \caption{Public parks in Chicago. We color them by their sizes.}
    \label{fig:park_scores}
\end{figure}


\subsubsection{Distance} 

To quantify access to parks, we examine distances from home locations to parks via city streets. It seems desirable to use a tool like OpenStreetMap \cite{openstreetmap} or Google Maps \cite{GoogleMapsAPI} to compute distances between census blocks and parks. However, due to both computational cost and monetary cost, we instead simply use the 
$L_1$ distance.
When the city streets are grid-like, as they are in Chicago, the $L_1$ distance is a good approximation of the city-street distance.


\subsubsection{Results}

In Figure~\ref{fig:Parks_Results}, we show the maximal components for Chicago parks with the six
longest lifetimes.
These components span various parts of the city, although the five components with the longest lifetimes are all in the southern half of Chicago. 
These 
regions are the ones that require people to travel
the farthest
to access reasonable-quality parks.

\begin{figure}
    \centering
    \includegraphics[width=0.45\linewidth]{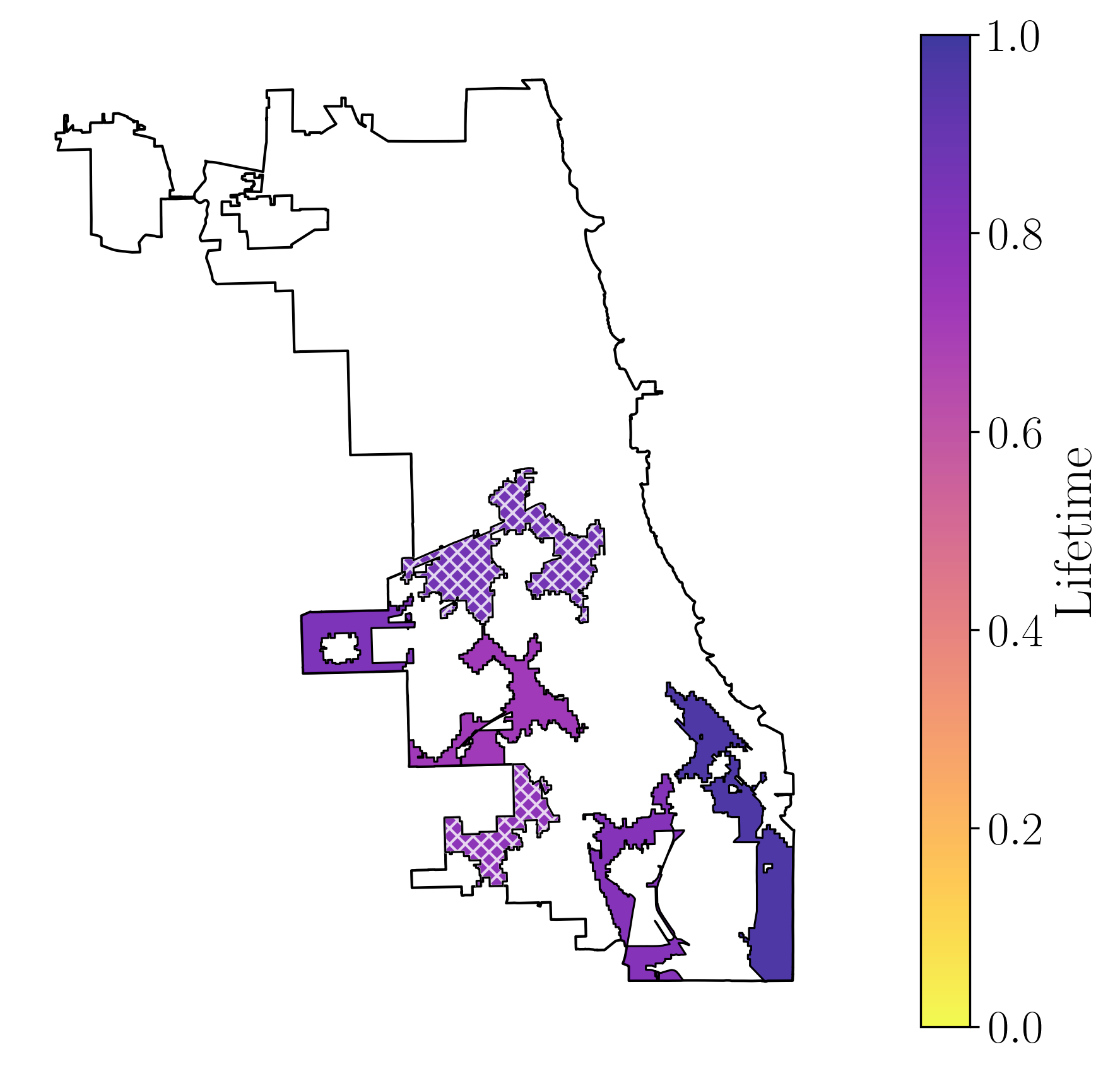}
    \caption{
    The maximal components for the Chicago parks with the 6 longest lifetime (see
     Section~\ref{sec:range}). 
    The cross hatching is used to help distinguish between regions.
    }
    \label{fig:Parks_Results}
\end{figure}


\subsubsection{Comparison to Our One-Parameter PH Approach}

In Section~\ref{sec:single_baseline}, we used a naive one-parameter filtration to identify the geographic regions that are farthest from public parks. That sublevel-set filtration does not account for park quality. 
In Figure~\ref{fig:Parks_Superlevel_Filtration}, we show four steps in the sublevel-set filtration. The parks are distributed roughly uniformly across Chicago, so most of the components have very short lifetimes. The components quickly merge into one large component that includes
most of Chicago. 
Consequently, this one-parameter approach is not very informative.

From the one-parameter filtration, we identify components with sizes that are less than
a specified threshold, analogously
to how we constructed
truncated PH classes in our bifiltration.
In Figure~\ref{fig:parks_baseline_components}, we 
show
the resulting components.
Many of the components from the 
sublevel-set 
filtration
simply captures regions that are far from resource sites.
Our bifiltration provides a much more detailed
and nuanced picture of park coverage.

\begin{figure}
    \centering
       \includegraphics[width = .4\textwidth]{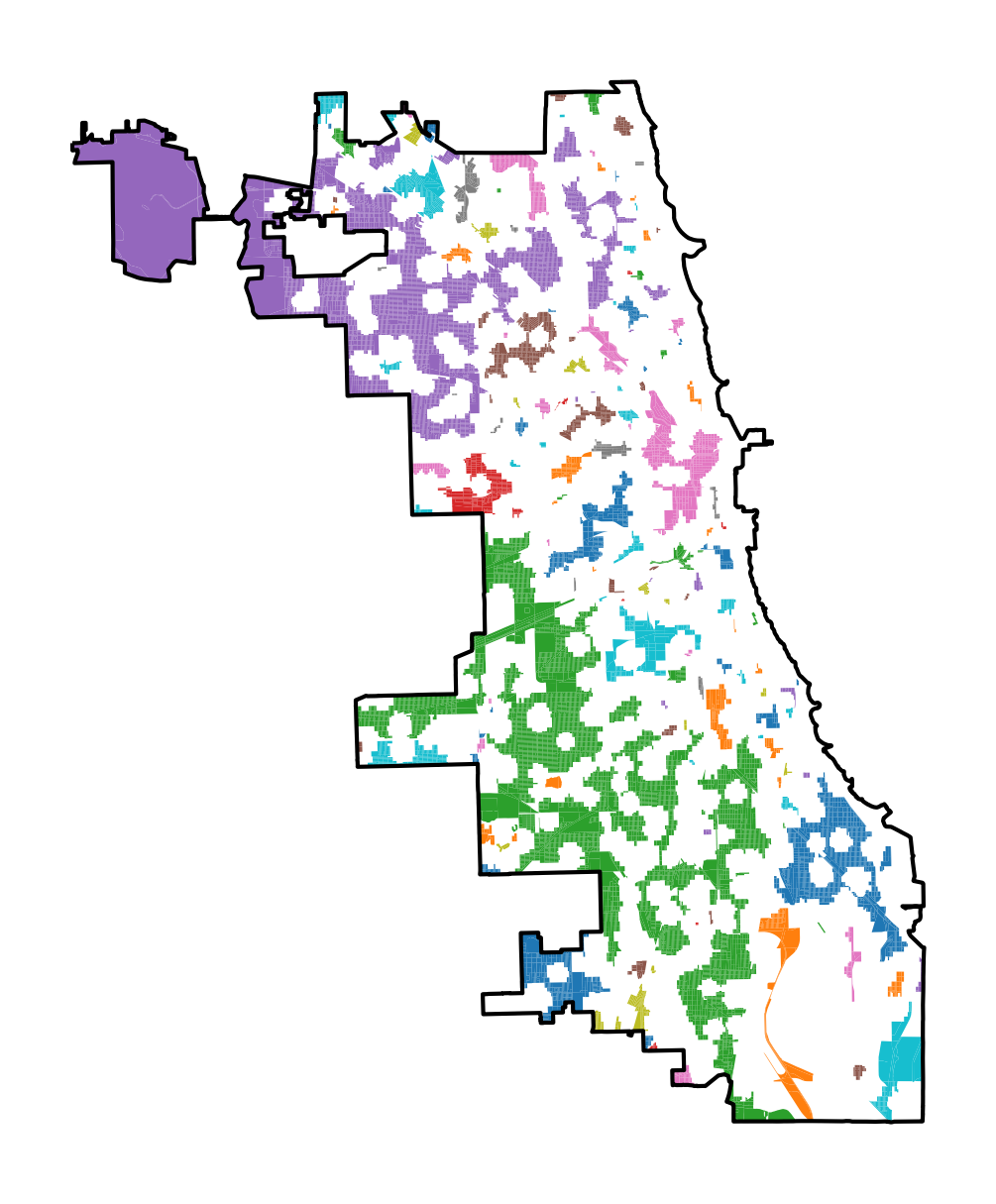}
    \caption{Components in our naive one-parameter superlevel-set filtration for accessibility of public parks in Chicago. 
    We randomly assign colors to help distinguish visually between neighboring components.
    See Figure~\ref{fig:Parks_Superlevel_Filtration} for a visualization of this superlevel-set filtration.
    }    
    \label{fig:parks_baseline_components}
\end{figure}


\subsection{Landfills in Illinois}\label{sec:landfills}

In our second case study, we examine landfills in Illinois. The geographic region $D$ is the state of Illinois, the sites $\{M_i\}$ are landfills, and
the set $\mathcal{B}$ is the set of census tracts. Census tracts, which are aggregations of census blocks,
have 
meaningful variance in their landfill exposure at the state scale. We use the coarser geographic unit of census tracts rather than census blocks because the latter are computationally demanding for a statewide analysis.


\subsubsection{Score} 

The score of each landfill is the total amount of waste (in short tons\footnote{A short ton is $2000$ lbs, which is approximately $907$ kg.}) at that landfill. 
Environmental hazards do not always scale directly with the amount of waste, but intuitively we expect that the
harmful chemicals
that arise from waste decomposition
are 
correlated with waste mass.
In Figure~\ref{fig:landfills}, we show the landfills and color them by their amount of waste (in short tons).

\begin{figure}
    \centering
    \includegraphics[width=0.4\linewidth]{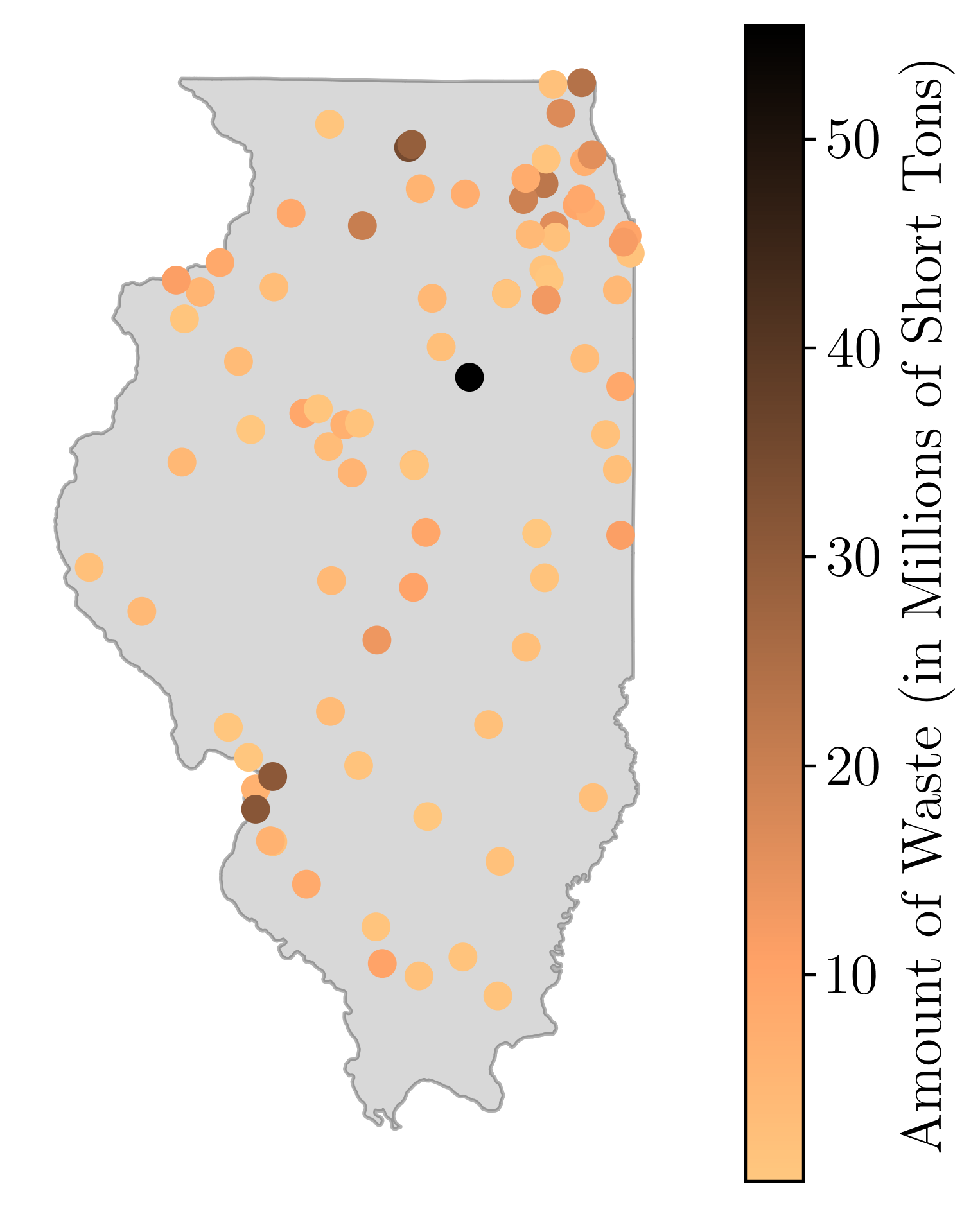}
    \caption{ Landfills in the state of Illinois. We color them according to their amount of waste (in millions of short tons). 
    }
    \label{fig:landfills}
\end{figure}


\subsubsection{Distance} 

To estimate the exposure to toxins and pollutants from a landfill (e.g., through air or groundwater), we let $d(M_i, T_j)$ be the Euclidean distance between landfill $M_i$ and the centroid of census tract $B_j$. 


\subsubsection{Results}

In Figure~\ref{fig:Landfill_Results}a, we show the maximal components for our case study of landfills.
Figure~\ref{fig:Landfill_Results}b shows the four maximal components with the largest lifetimes.
The most-exposed region is in
the southwest.
By comparing Figure~\ref{fig:Landfill_Results}a to Figure~\ref{fig:landfills}, we observe that this region neither has
the highest landfill density nor includes the individual
 landfill with the most waste. 
The two regions --- 
the Chicago area (i.e., around the northeastern part of Illinois) and slightly southwest of Chicago
--- that one may intuitively expect to have the most
exposure based on independently treating the locations and waste amount 
are not the most exposed regions when we analyze these two factors simultaneously. 
This observation highlights an important benefit of multiparameter PH.

\begin{figure}
    \centering
    \subfloat[Components from bifiltration]{\includegraphics[height = 50mm]{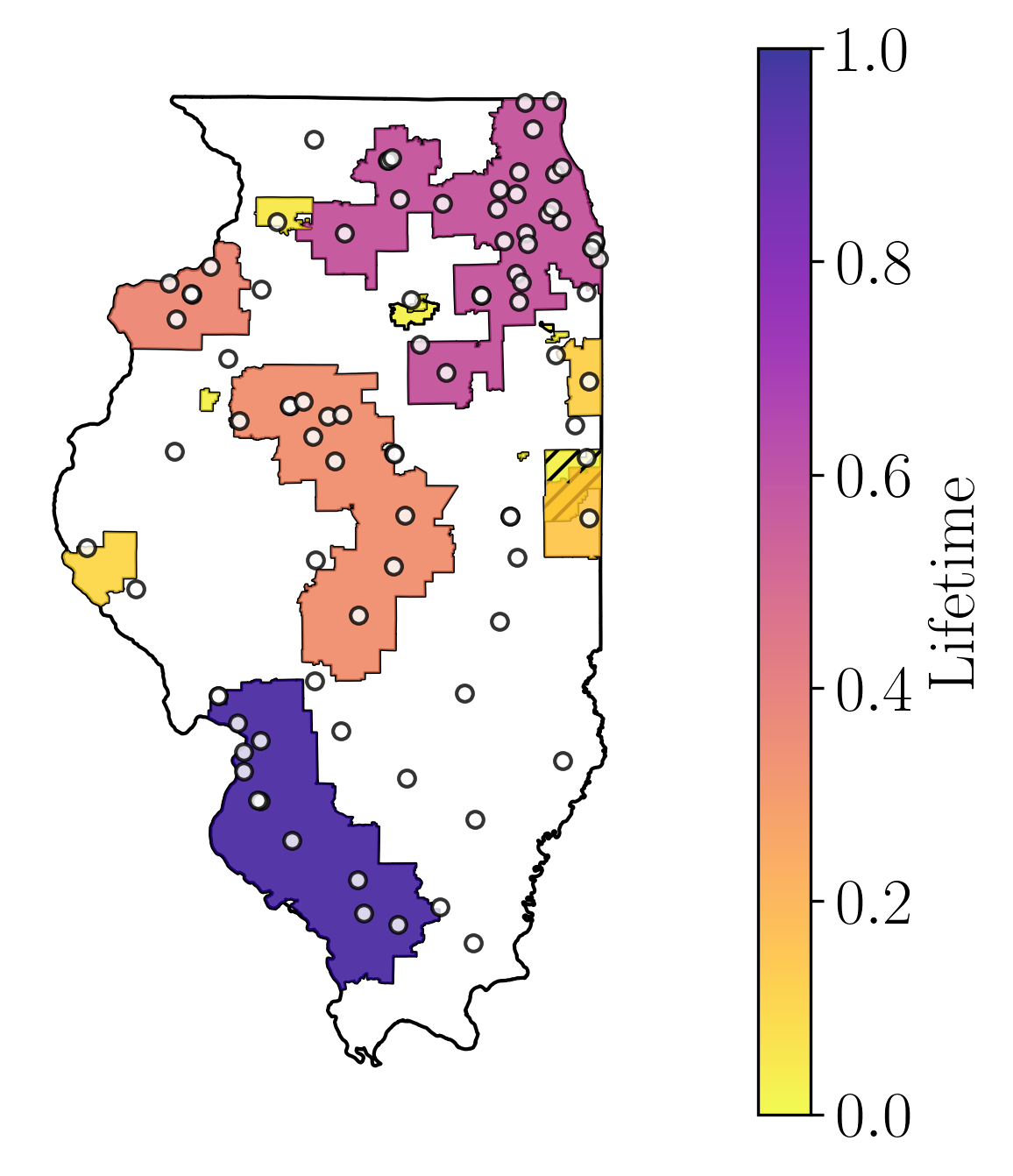}}
    \hspace{5mm}
    \subfloat[Top 4 components from bifiltration]{\includegraphics[height = 50mm]{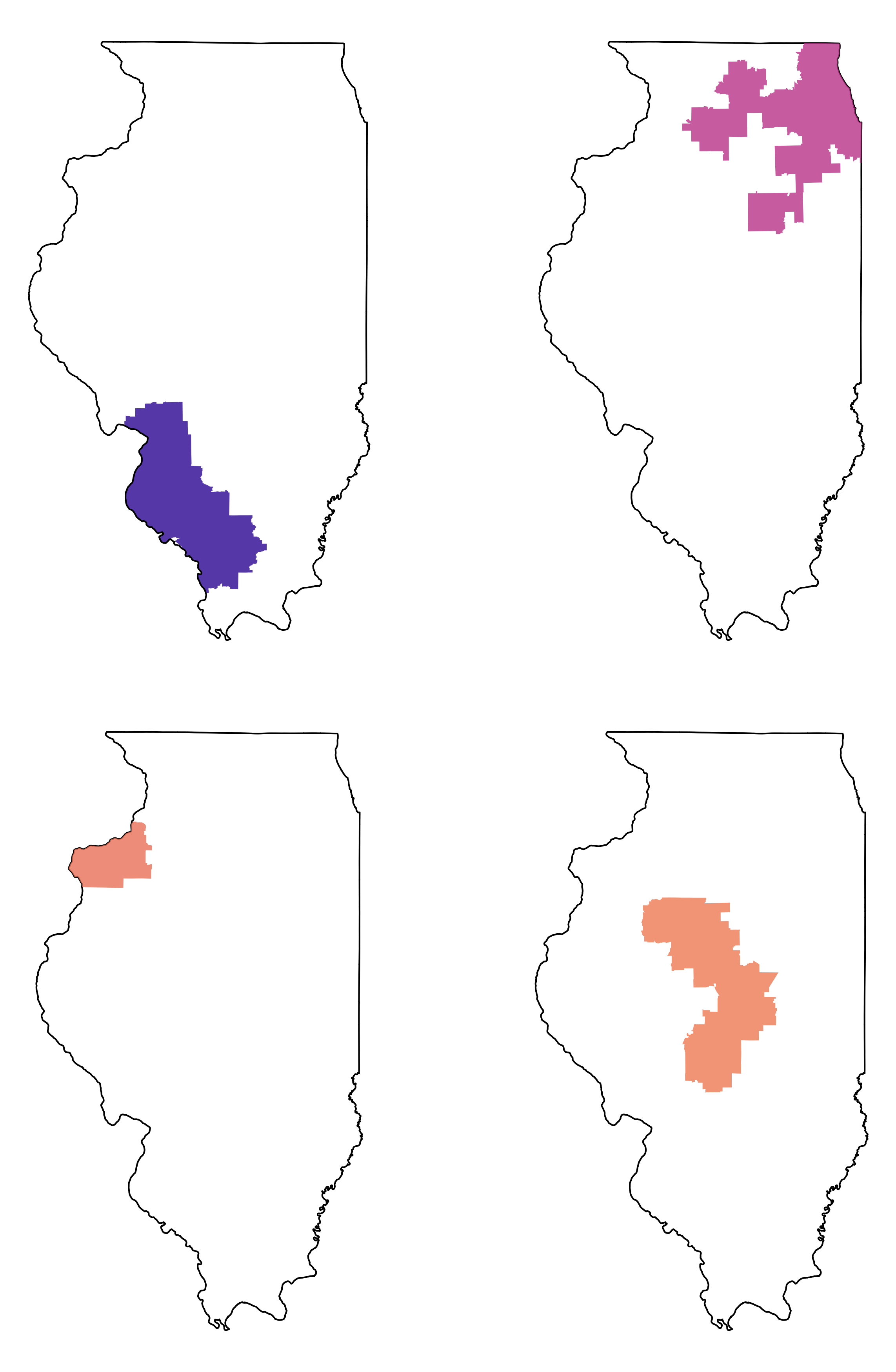}}
    \hspace{5mm}
    \subfloat[Components from one-parameter filtration]{\includegraphics[height = 50mm]{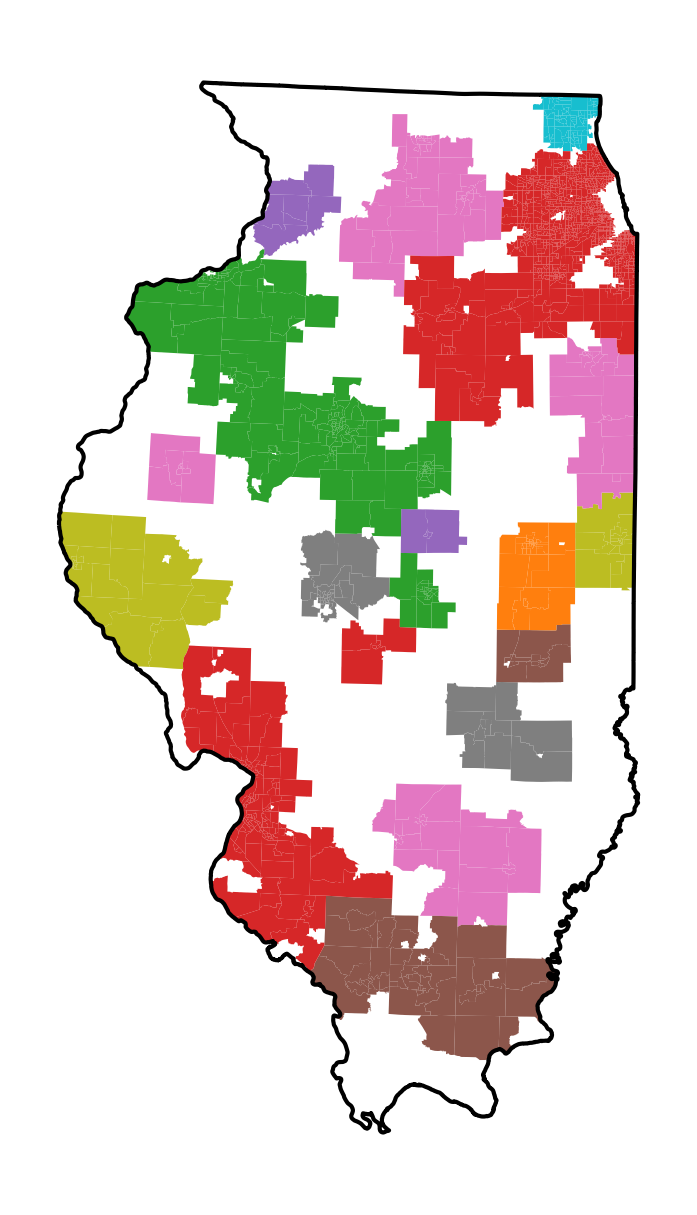}}
    \caption{(a) Maximal components for landfills (black dots). We color the components by their lifetimes (see
     Section~\ref{sec:range}). The cross hatching is used to help distinguish between overlapping regions. 
     (b) The four components from (a) with the largest lifetimes plotted individually for clarity. (c) 
     The components of our naive one-parameter superlevel-set filtration based on distance to the nearest landfill. 
     We randomly assign colors to help distinguish visually between neighboring components.}
    \label{fig:Landfill_Results}
\end{figure}


\subsubsection{Comparison to Our One-Parameter PH Approach}

In Figure~\ref{fig:Landfills_Sublevel_Filtration}, we show four steps in the naive one-parameter sublevel-set filtration that we defined in Section~\ref{sec:single_baseline}. It filters by distance to the nearest landfill, without accounting for the amount of waste at each landfill. 
As in our case study of public parks, we identify the components that arise from 0D PH (see Figure~\ref{fig:Landfill_Results}c), whose sizes are under some threshold.
These components are analogous to the truncated PH classes in our bifiltration.
Unlike the maximal components in our bifiltration (see Figure~\ref{fig:Landfill_Results}a), these components are numerous and cover most of Illinois.

\begin{figure}
    \centering
    \includegraphics[width=\linewidth]{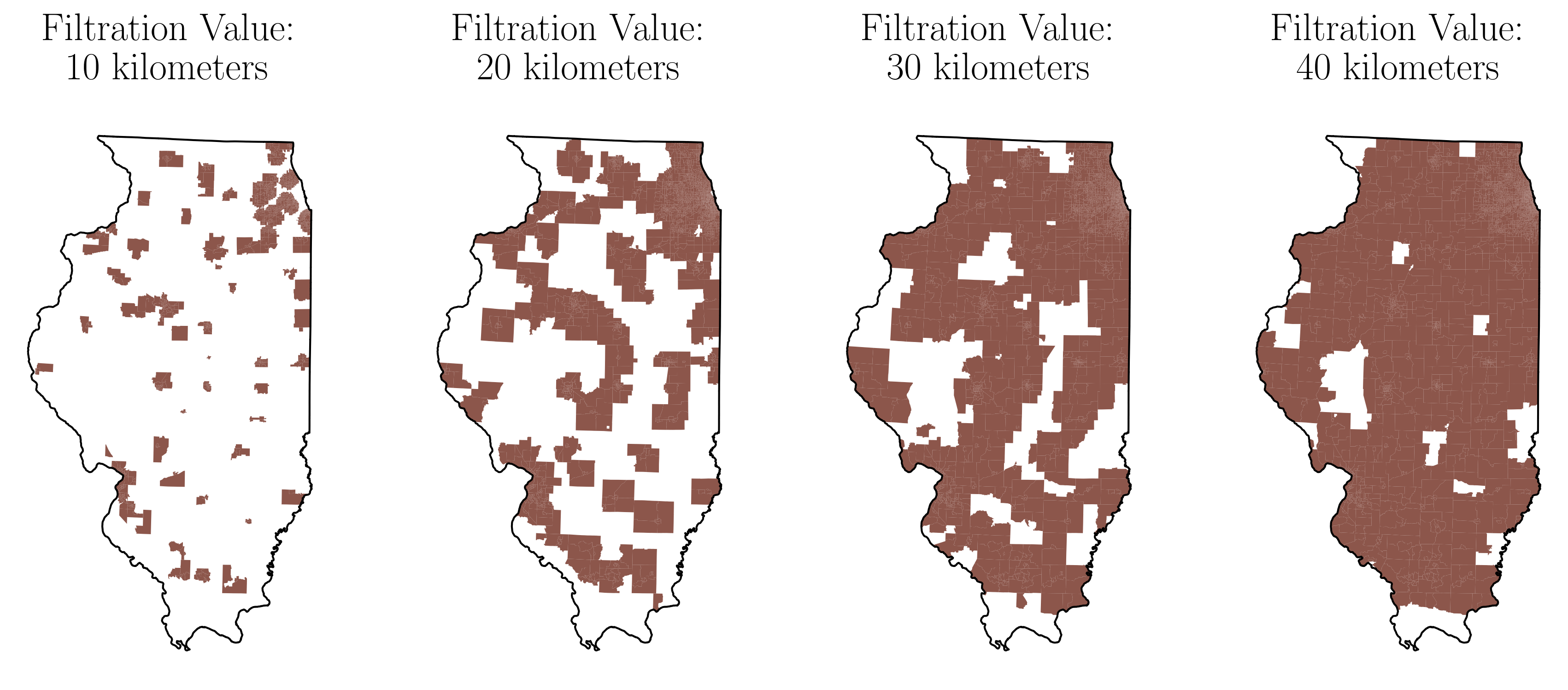}
    \caption{Four steps in the naive superlevel-set filtration (see Section~\ref{sec:single_baseline}) for landfills in Illinois.
    }
\label{fig:Landfills_Sublevel_Filtration}
\end{figure}


\subsection{Pubs in Chicago}\label{sec:pubs}

In our third case study, we examine 
bars and pubs
in Chicago. The geographic region $D$ is Chicago, the set $\mathcal{B}$ of geographic units is the set of census blocks, and the sites $\{M_i\}$ are pubs.

Unlike in our previous case studies, we examine bars and pubs in Chicago from two different perspectives: 
\begin{itemize}
    \item[(1)]{Resources: Establishments like bars and pubs provide drinks, food, and social spaces for for their patrons.}
    \item[(2)]{Nuisances: Bars and pubs can annoy local residents that are sensitive to noise, 
    want to minimize their exposure to drinking culture, or for other reasons.} 
\end{itemize}
For these two different perspectives, we construct a bifiltration for each, in which we use
different distance measures and scores to 
capture the impact of bars and pubs on people.


\subsubsection{Scores} 

We obtain our scores for pubs from Google Reviews \cite{GoogleAPI}. 
When we examine pubs as resources, a pub's score is its total stars from reviews.
When we examine pubs as nuisances, a pub's score is its total number of reviews.
A pub's total number of stars estimates its popularity.
Star ratings reflect the perceived quality of a pub as a resource, so a higher rating (i.e., a rating with more stars) indicates a higher perceived quality.
A pub's number of ratings reflects the number of people that visit it. We use this number of ratings a proxy for how many people visit a pub and hence for its sensory impact (and its potential as a nuisance) in its
neighborhood.
In Figure~\ref{fig:pubsgood_and_bad}, we show the pubs and their scores.

\begin{figure}
    \centering
    \includegraphics[width=0.48\linewidth]{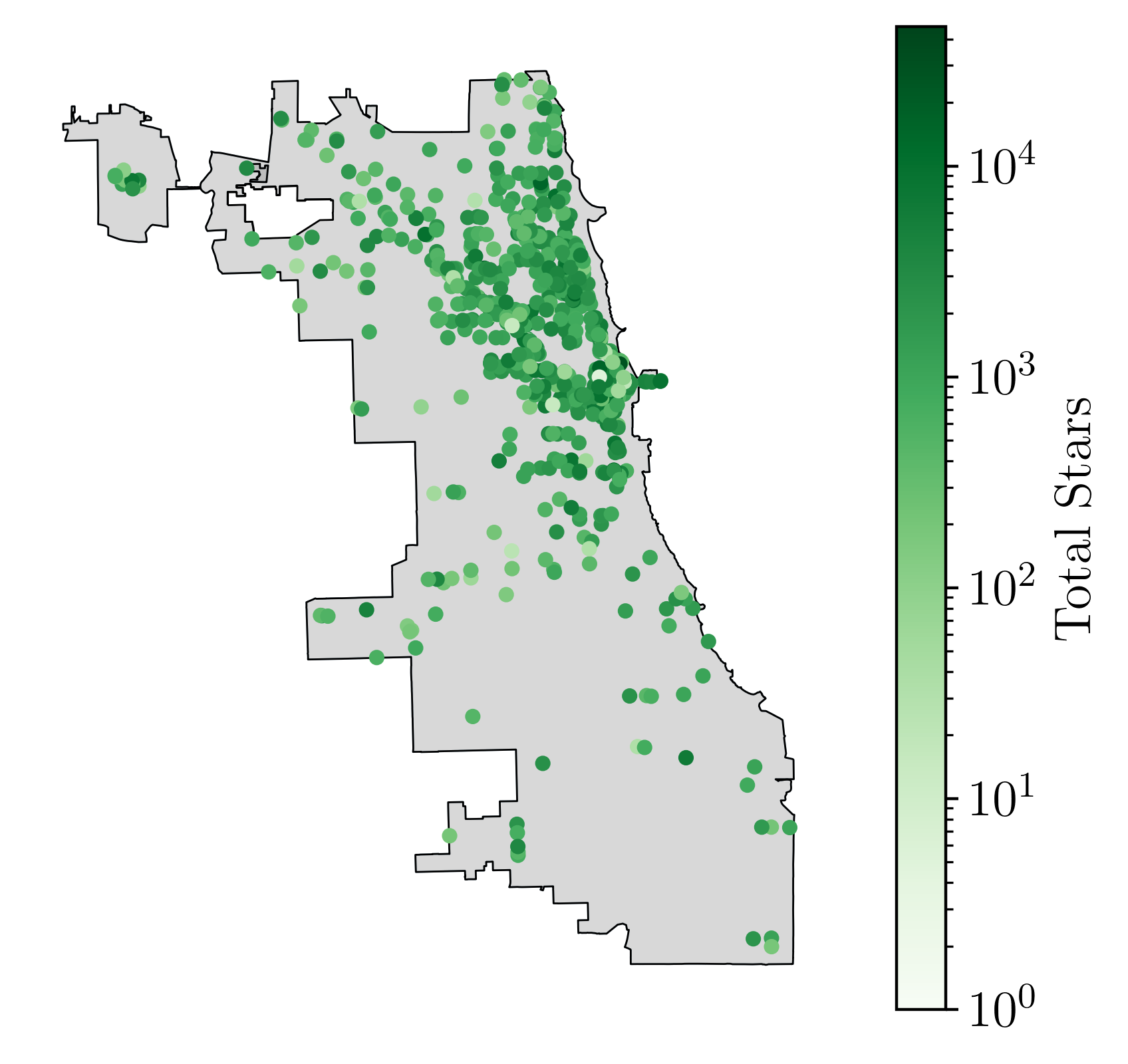}
    \includegraphics[width=0.48\linewidth]{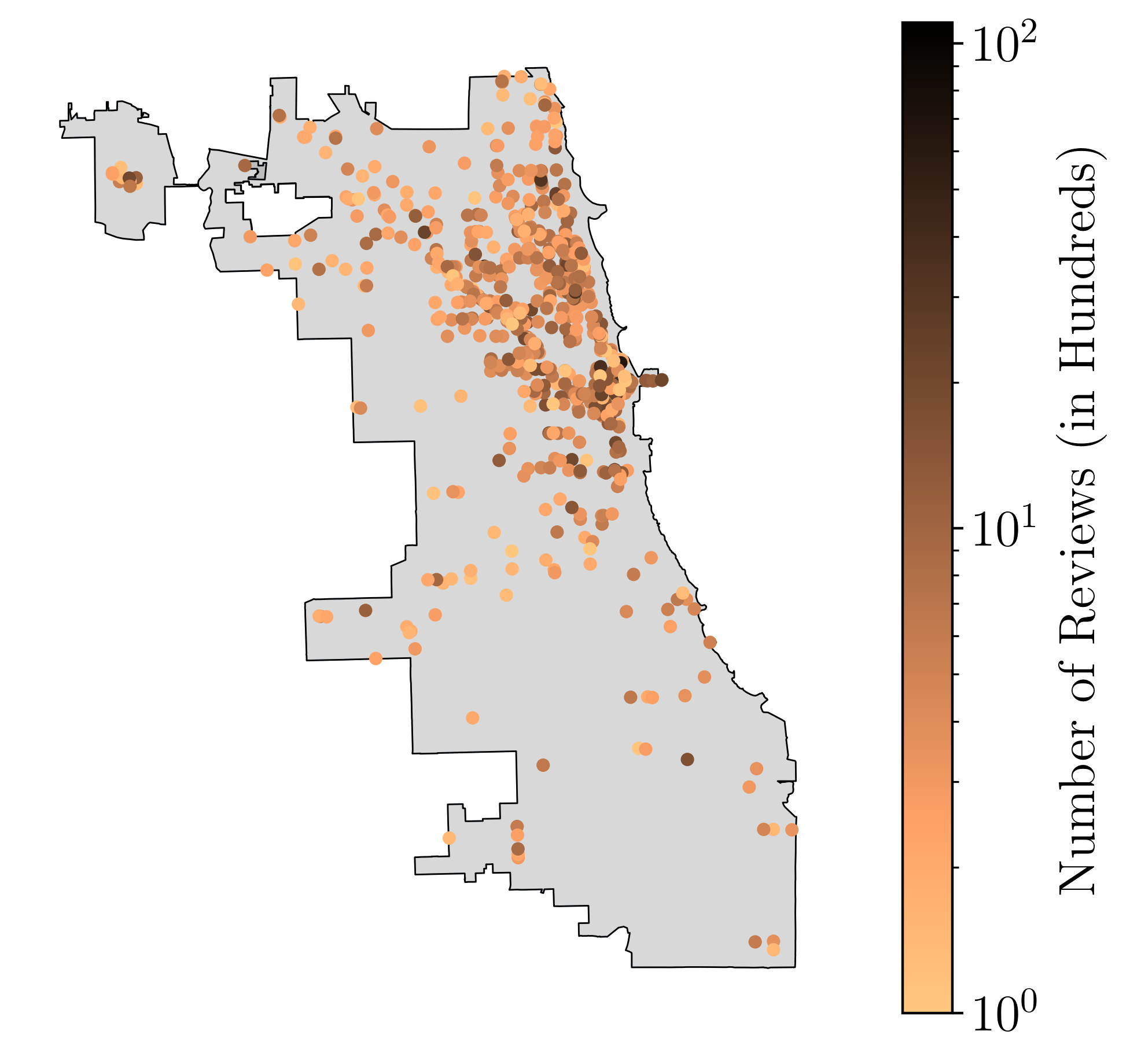}
    \caption{Bars and pubs in the city of Chicago. We color them by (left) the total stars in the reviews
     and (right) the total number of reviews. 
    Both color bars use logarithmic scales, but their value ranges are different.
    }
    \label{fig:pubsgood_and_bad}
\end{figure}


\subsubsection{Distances} 

As in our case study of parks, when treating pubs as a resource, we define distances from pubs to census blocks by considering accessibility, in which we use $L_1$ distance to approximate city-street distance.
As in our case study of landfills, when treating pubs as a nuisance, we define distance by considering physical proximity, in which we use Euclidean distance.


\subsubsection{Results}

We show the maximal components when treating pubs as resources in Figure~\ref{fig:Pubs_Results}a, and we show the maximal components when treating pubs as nuisances in Figure~\ref{fig:Pubs_Results}b.
When considering pubs as a resource, some of these maximal components are simply regions without any
pubs.
Counterintuitively, however, some of the maximal components with the longest lifetimes (i.e., the regions with poorest pub coverage) are near the high-density region of pubs in the northeast.
When we treat pubs as nuisances, one of the maximal components 
includes
a high-density region, as expected. 
However, it is not the component with the longest lifetime; several smaller maximal components have longer lifetimes, again highlighting the fact that there is more to nuisance exposure and resource coverage than geospatial density.
The maximal component with the longest lifetime includes Chicago O'Hare International Airport, which lies in the northwest corner of the city.
Many people travel through that airport, so it makes sense that nearby pubs have so many reviews.

\begin{figure}
    \centering
    \subfloat[Pubs as resources]{\includegraphics[width=0.45\linewidth]{PUBS_resource_Maxrep_ALL_lifetime_areaperc_0p2_0p9}}\label{fig:Pubs_Resource_Results}
    \hspace{5mm}
    \subfloat[Pubs as Nuisances]{\includegraphics[width=0.45\linewidth]{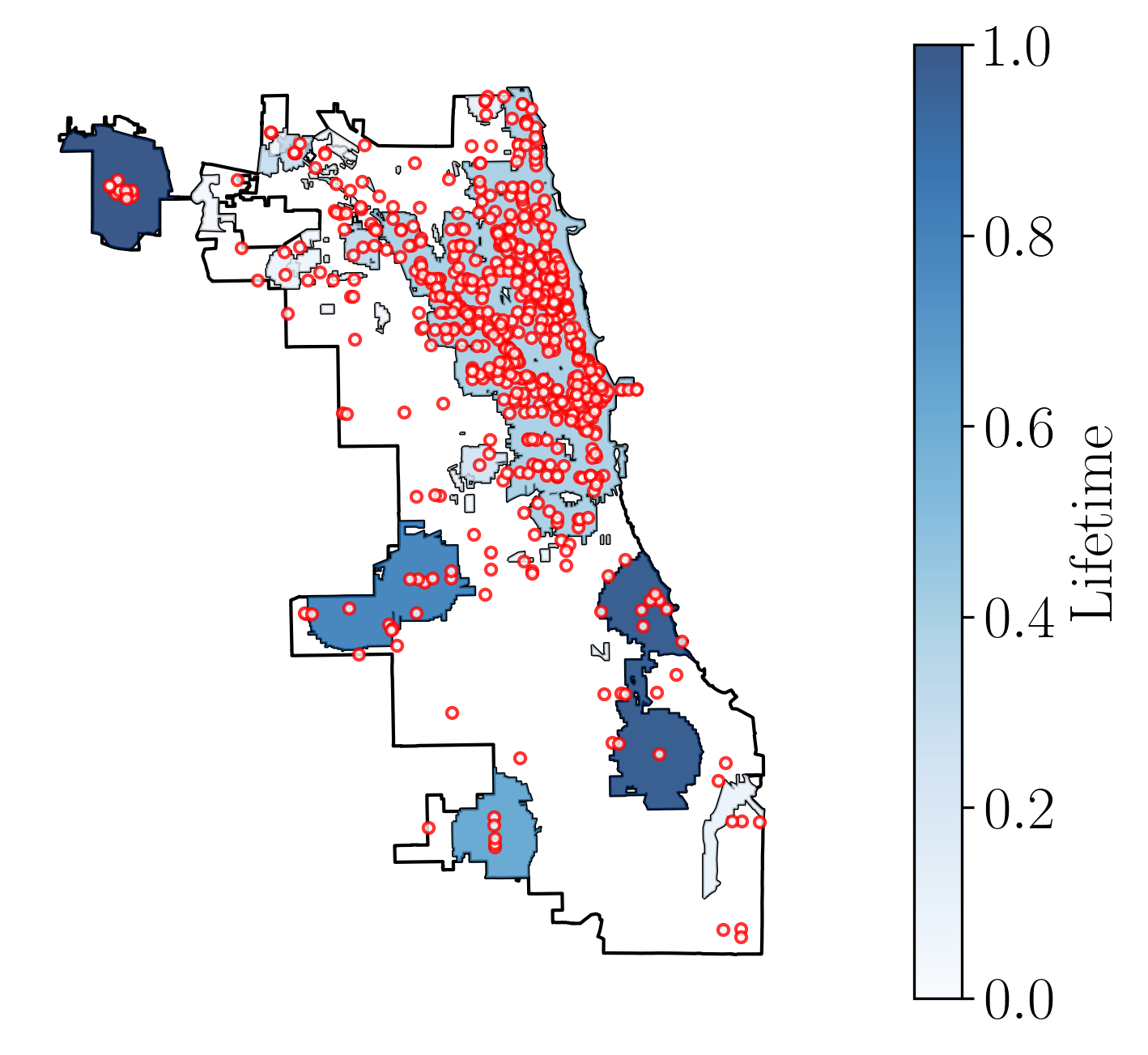}}\label{fig:Pubs_Nuisance_Results}
    \caption{Maximal components for our examinations of pubs as (a) resources and (b) nuisances. The red dots are the locations of pubs and bars. The components are colored by their lifetimes (see Section~\ref{sec:range}).    }
    \label{fig:Pubs_Results}
\end{figure}


\subsubsection{Comparison to Our One-Parameter PH Approach}

As a comparison, we show the first four steps of naive one-parameter filtrations for treating pubs as a resource and as a nuisance in Figure~\ref{fig:Pubs_Filtrations}. 
We show the truncated components in Figure~\ref{fig:pubs_baseline_components}.

\begin{figure}
    \centering
    \includegraphics[width=\linewidth]{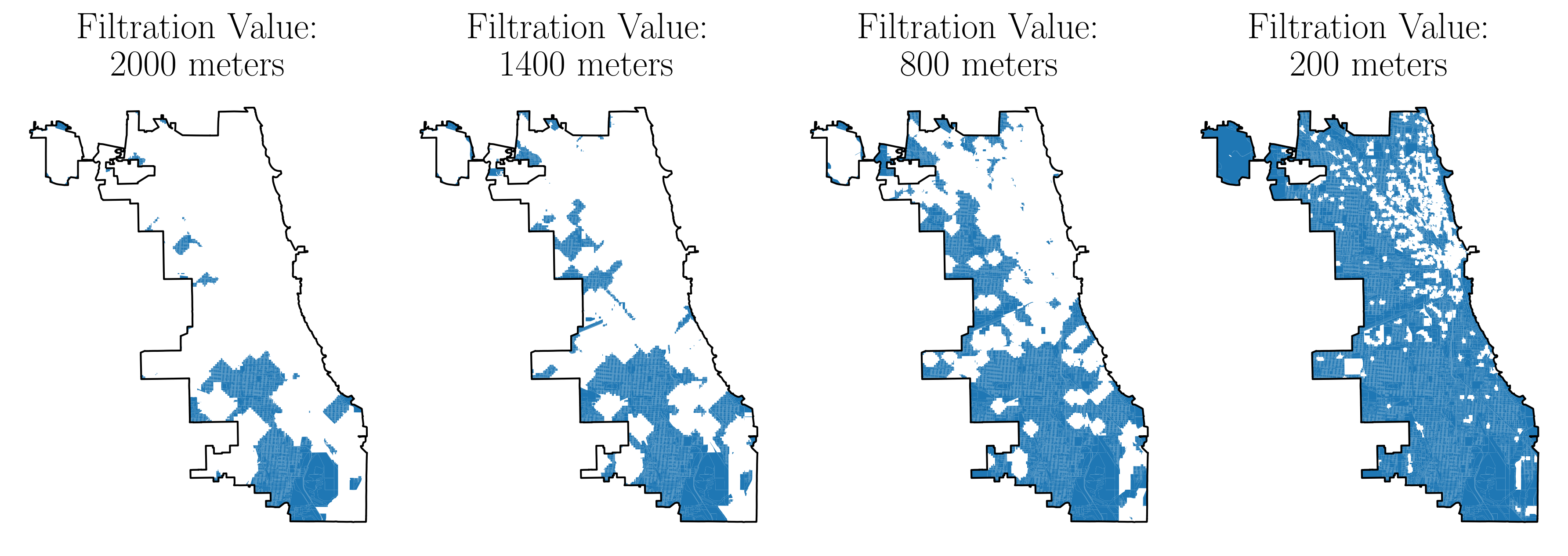}
    \includegraphics[width=\linewidth]{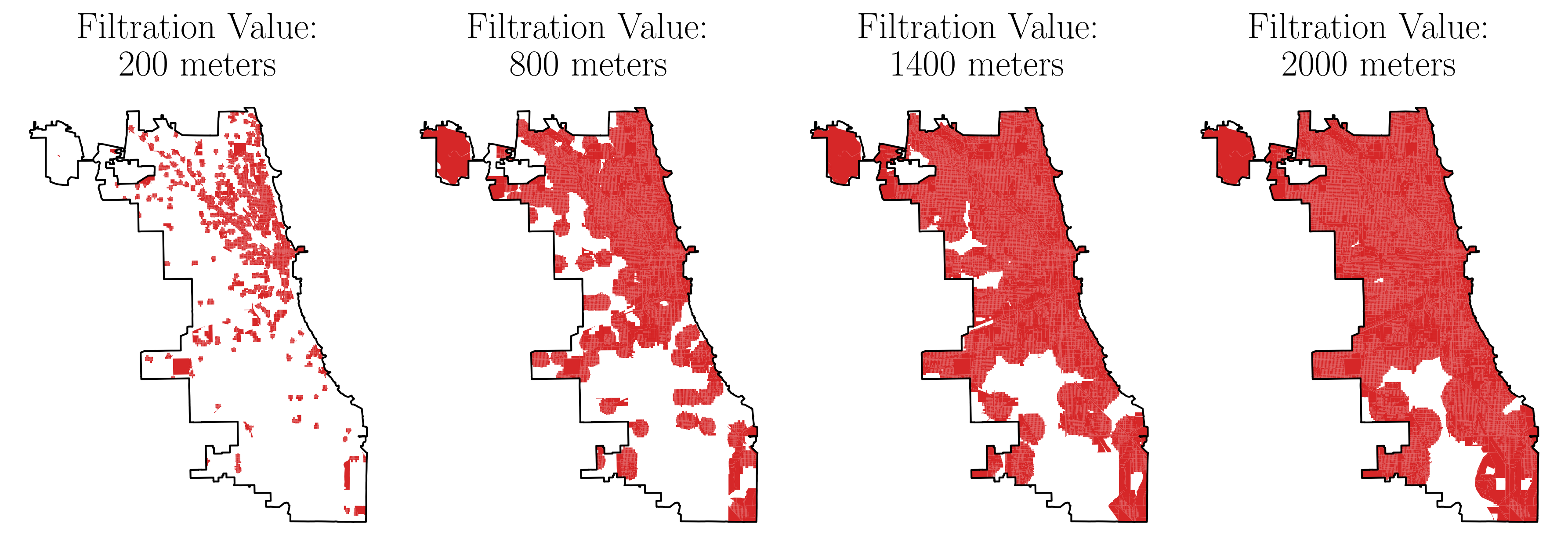}
    
    \caption{Four steps in the (top) superlevel-set filtration and (bottom) sublevel-set filtration (see Section~\ref{sec:single_baseline})
    for bars and pubs in Chicago.
    }
    \label{fig:Pubs_Filtrations}
\end{figure}

\begin{figure}
    \centering
    \subfloat[Pubs (resource)]{\includegraphics[width=0.45\linewidth]{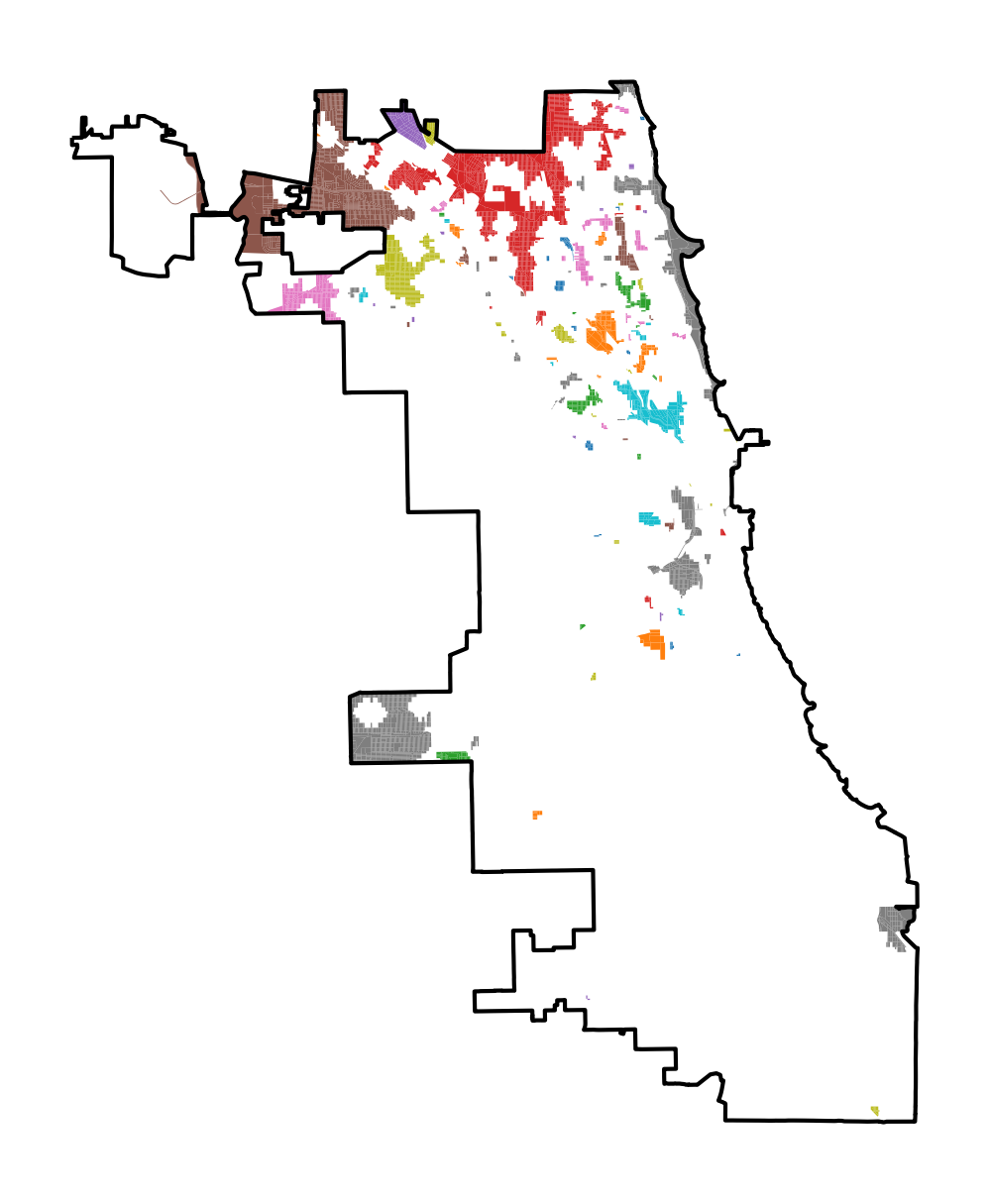} }
    \subfloat[Pubs (nuisance)]{\includegraphics[width=0.45\linewidth]{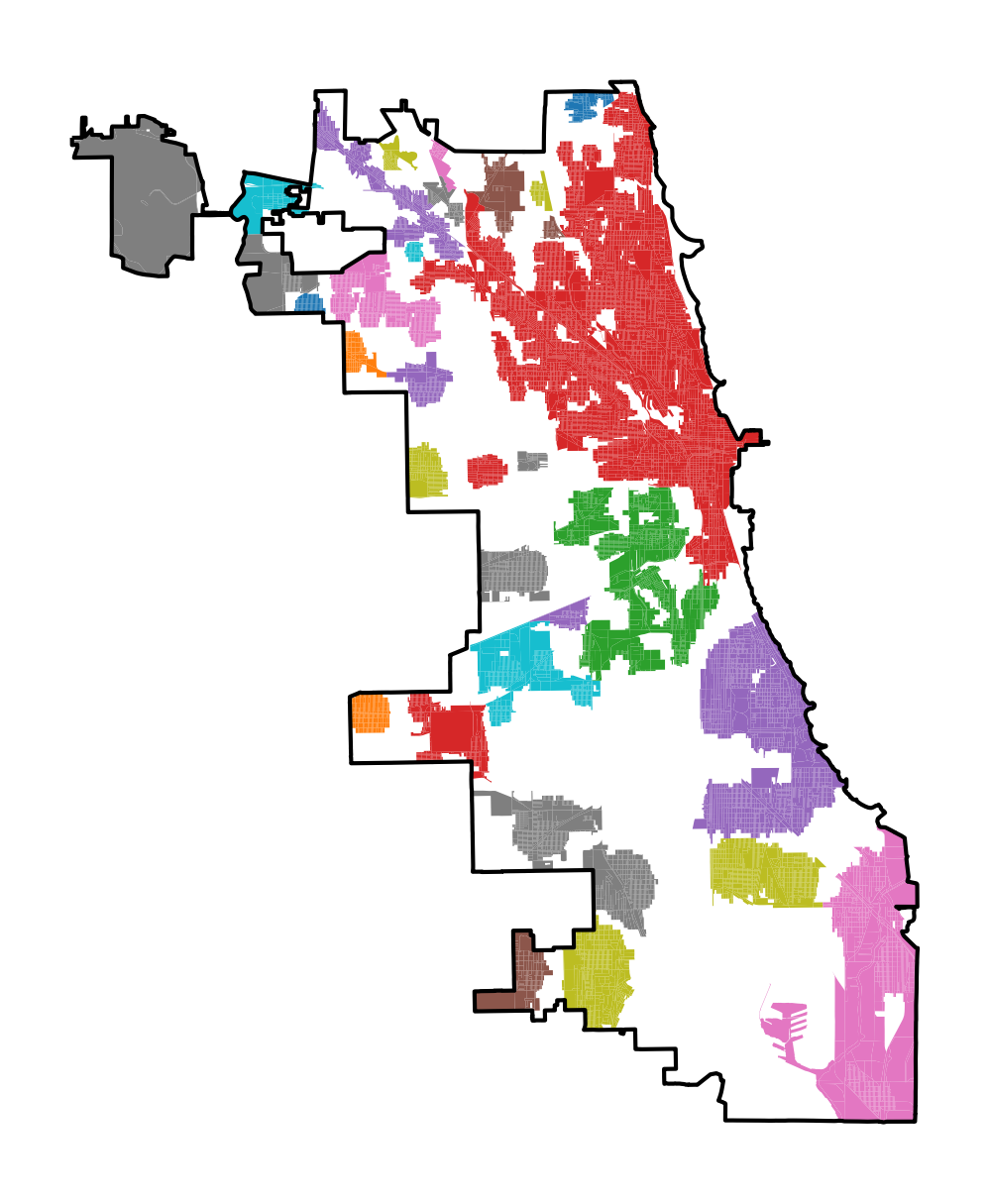}}
    \caption{Components in one-parameter filtrations of Chicago pubs as (a) resources (in which we use superlevel-set filtration) and (b) nuisances (in which we use sublevel-set filtration). We randomly assign colors to help distinguish between neighboring components.
    }
    \label{fig:pubs_baseline_components}
\end{figure}



\section{Conclusions and Discussion}\label{conc} 

To assess resource access and nuisance exposure, including for the evaluation of equity, one needs to analyze both geographic access/exposure and and resource/nuisance amount.
One way to do this is to employ approaches from topological data analysis (TDA).


\subsection{Summary}

We developed a multiparameter persistent-homology (PH) \linebreak framework to analyze access to resources and exposure to nuisances in a manner that accounts both for their locations and for their quality/severity.
Multiparameter PH
is notoriously challenging to interpret. 
We developed a method to capture information the multiparameter PH that is both computationally efficient and useful in practice. As case studies, we examined three different applications---public parks, landfills, and bars and pubs---and identified regions with the poorest access to resources or the most exposure to nuisances. We also compared the regions that we identified using our framework with regions from a one-parameter PH approach.

Incorporating both distance and quality/severity parameters in our \linebreak two-parameter PH framework allowed us to provide a systematic and nuanced evaluation of resource coverage and nuisance exposure.
In particular, we identified regions that have poor access to high-quality resources or overexposure to high-severity nuisances.
By contrast, when we employed one-parameter PH, we predominantly captured areas that are not very close to a resource or a nuisance; we did not account for the quality or severity of those nearby resources or nuisances, respectively.
However, some nearby resources have such low quality that easy access to them is little improvement 
on living far away from any resources.
Similarly, significant exposure to a low-severity nuisance is not much more harmful than little exposure to any nuisance.
Accordingly, employing a naive one-parameter PH approach yields a notably incomplete picture of resource access and nuisance exposure.


\subsection{Discussion}

Our PH framework requires many choices, which influence both our qualitative and quantitative results.
The ability to make such choices is both a benefit and a drawback of our approach. The flexibility of our approach is very beneficial, but how should one make these choices? One can explore alternative choices to those in our paper using code in our repository \cite{ourgithub}.

One key choice in our framework is the ability to employ any desired measure of resource quality or nuisance severity. We used total area for parks and two different ways of quantifying Google Reviews for bars and pubs, but there are numerous possible quality and severity measures.
For example, one can use Google Reviews of parks (we include this data in our GitHub repository \cite{ourgithub}) or create a score based on park attractions (e.g., playgrounds and athletic facilities), which is available at the Chicago Data Portal \cite{chicagodata}. 
It is convenient that one can explore the availability of resources using any desired quality score, but how does one compare or decide between qualitatively different results that arise from different choices?

Another choice is the size threshold to determine when an aggregated maximal component is ``too large.'' 
We examined (1) a threshold that is based on the number of census blocks in a component and (2) a threshold that is based on the area of a component's geographic region. The former is more computationally efficient, but the latter accounts for the fact that census blocks and tracts do not have homogeneous sizes.
Depending on the choice of threshold, 
one may not sufficiently capture large maximal components. 
For example, in Figures~\ref{fig:CompareCutoffs_Pubs}(a) and (b), we use thresholds of 10\% and 20\%, respectively, of Chicago's total area.
Most
of the maximal components are the same, but Figure~\ref{fig:CompareCutoffs_Pubs}b includes a large component in the southern half of Chicago that does not appear in Figure~\ref{fig:CompareCutoffs_Pubs}a. 
However, it is easy to identify such large regions by visual inspection, and we seek to find more subtle underresourced or overexposed regions through our analysis.

\begin{figure}
    \centering
    \subfloat[S=10\%]{\includegraphics[width=0.49\linewidth]{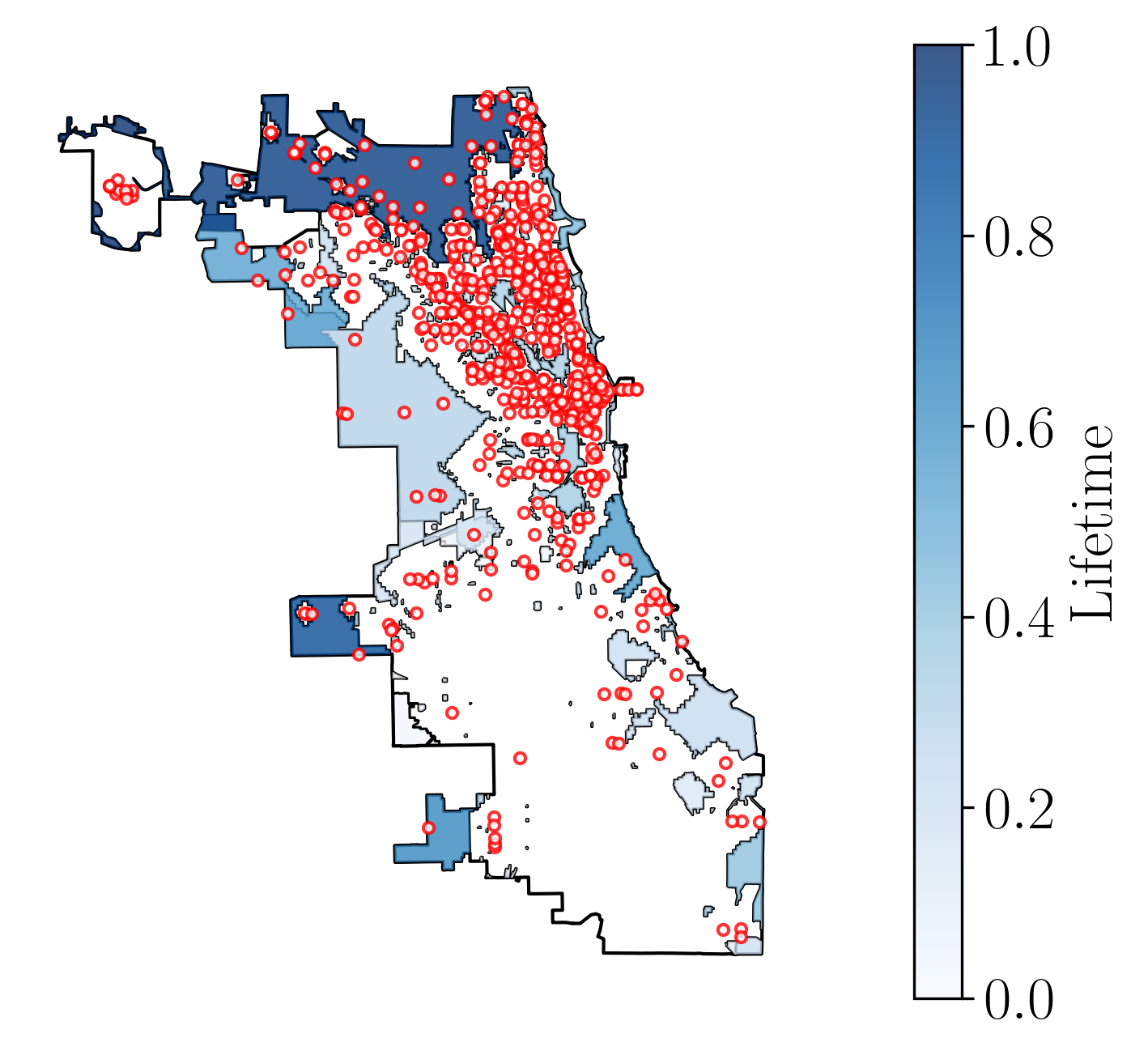}}\label{fig:CompareCutoffs_Pubs10}
    \subfloat[S=20\%]{\includegraphics[width=0.49\linewidth]{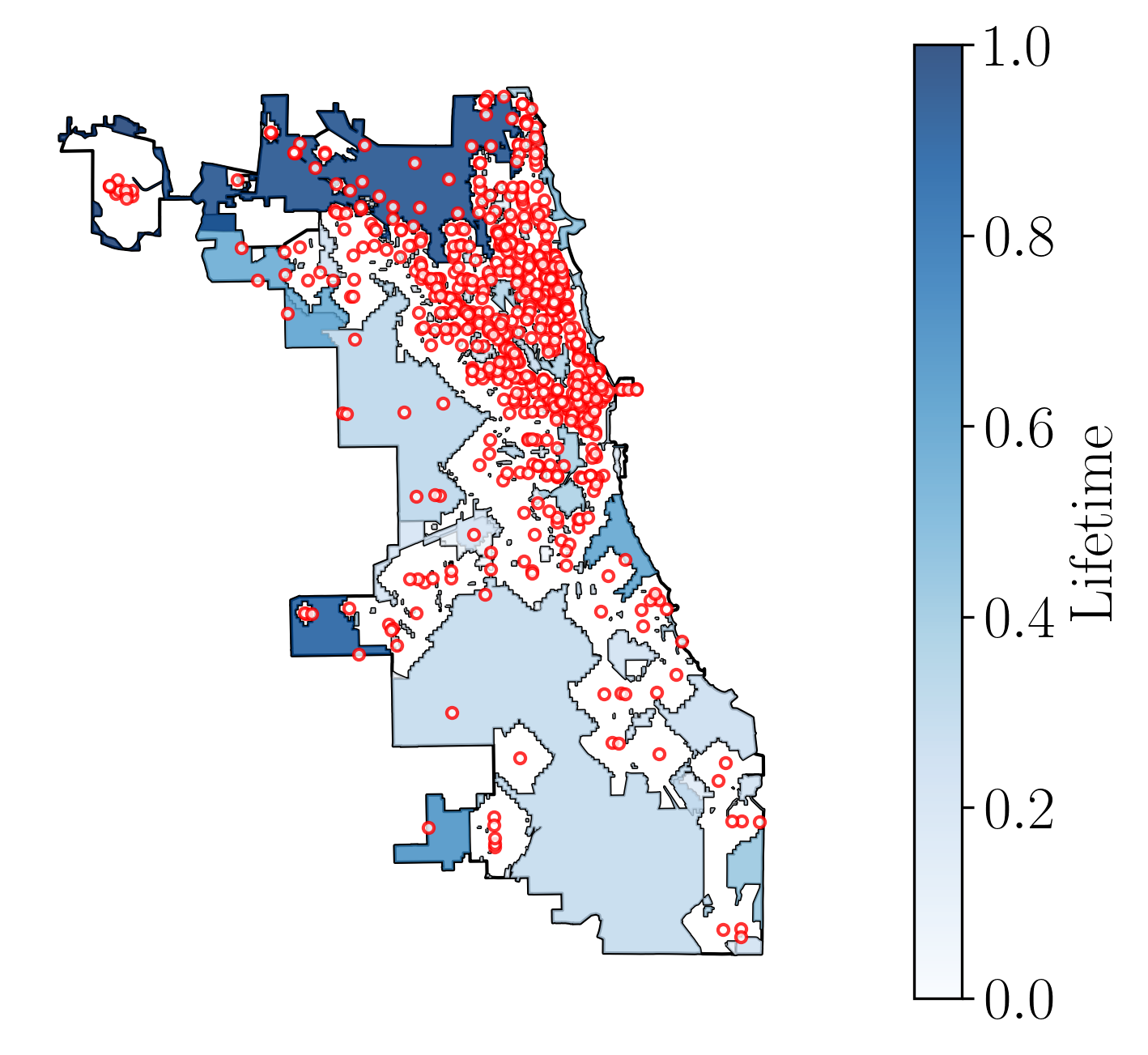}}\label{fig:CompareCutoffs_Pubs20}
    \caption{Maximal components of Chicago pubs as a resource when we use (a) a size threshold of $S = 10$\% and (b) a size threshold of $S = 20$\%.}
    \label{fig:CompareCutoffs_Pubs}
\end{figure}


\appendix

\section{Data Collection} \label{app:data}
We now give details about our data collection.


\subsection{Public Parks in Chicago}

\subsubsection{Site Data}

We obtained the list and locations of Chicago parks from the Chicago Data Portal \cite{chicagodata}.
This data set includes the polygons that encode the park geometries.


\subsubsection{Quality Score}

The data set that we retrieved from the Chicago Data Portal \cite{chicagodata} includes the acreage of each park.


\subsection{Landfills in Illinois} 


\subsubsection{Site Data}

We obtained the locations of landfills through the Landfill Methane Outreach Program (LMOP) of the United States Environmental Protection Agency \cite{landfilldata}.
This program provides data for each state; we use the data from the state of Illinois.
The data set has several duplicate entries (e.g., the data for an original landfill and an expansion of that landfill are listed as two separate data points), so we remove the duplicates.\footnote{The location and measurement of the total amount of waste in short tons is the same across duplicate entries, so it does not matter which entry one removes as the duplicate.}
This yielded a list of 90 landfills. 


\subsubsection{Quality Score} 

The landfill data set includes entries of ``waste in place'', which measures the weight of the waste in a landfill in units of short tons.
We remove any landfills that are missing this data.


\subsection{Bars and Pubs in Chicago} 


\subsubsection{Site Data} 

To obtain the geographic coordinates (latitude and longitude) and associated names of bars and pubs in Chicago, we followed the approach of Cocoran and Jones \cite{corcoran_topological_2023} and used OpenStreetMap (OSM) \cite{openstreetmap}.
We collected all locations with amenity tags of \texttt{amenity=bar} or \texttt{amenity=pub}, which resulted in an initial set of 949 bars and pubs.


\subsubsection{Quality Score} 

Collecting Google Reviews for the pubs required careful querying
of the Google Places application-programming-interface (API) \cite{GoogleAPI} because some bar and pub names are very common across many cities and states. Therefore, in addition to querying using bar and pub name, we also included the latitudinal and longitudinal coordinates that we obtained from OSM in our search results.
We also checked that each bar and pub is still open (by checking that its ``operational status'' is not listed as permanently closed or temporarily closed) and that it is in Chicago.
When this yielded a single pub of a given name, we checked that the
name and latitudinal and longitudinal coordinates from Google Maps approximately matches those from OSM. 
If the Euclidean distance between the 
two (latitude, longitude) coordinate pairs is at most
$1\times 10^{-3}$ and the names have a Jaro--Winkler similarity \cite{winkler1990string} of at least
$0.5$, then we kept the query result.
We chose these specific numerical thresholds
through trial and error. When a query yield two or more pubs with a given name, we selected the result that is closest geographically to the OSM coordinates and then checked the aforementioned conditions (that the pub is still open and that it has a similar name and location as in the OSM listing).
In both situations, when these measurements were not below the stated
thresholds, we checked manually using Google Maps to see if the bar had changed names or 
locations.
If so, we manually updated the pub data by adding the star rating and number of reviews; otherwise, we removed the bar from the data set. 
Ultimately, we manually updated the data for 19 bars. 
Our pruning procedure narrowed the 949 pubs from OSM to 734 pubs.
This difference of about 200 bars and pubs
is likely due to outdated data in OSM. 
We consolidated this data into a spreadsheet, and we include it in our repository \cite{ourgithub}. 

\section*{Acknowledgements}

We thank Serena Li, Matt Jones, Daryl DeFord, Gabrielle Gutierrez, Brian Brubach, and Amir Barghi for helpful comments and conversations.


\bibliographystyle{siamplain}
\bibliography{gis-tda-v04}

@article{barbosa2018,
title = "Human mobility: {M}odels and applications",
journal = "Physics Reports",
year = "2018",
author = "Hugo Barbosa and Marc Barthelemy and Gourab Ghoshal and Charlotte R. James and Maxime Lenormand and Thomas Louail and Ronaldo Menezes and José J. Ramasco and Filippo Simini and Marcello Tomasini",
}

@article{wu2026,
author = {Yulin Wu and Wenjia Zhang},
title = {Is a life circle a circle? {T}opological data analysis on activity space delineation},
journal = {Annals of the American Association of Geographers},
year = {2026},
volume = {advanced access},
doi = {10.1080/24694452.2025.2608173}
}

@article{schindler2022,
title = {How far do people travel to use urban green space? {A} comparison of three {E}uropean cities},
journal = {Applied Geography},
volume = {141},
eid = {102673},
year = {2022},
author = {Mirjam Schindler and Marion {Le Texier} and Geoffrey Caruso}
}

@article{bruno2024,
title = {A universal framework for inclusive 15-minute cities},
journal = {Nature Cities},
volume = {1},
pages = {631--641},
year = {2024},
author = {Bruno, Matteo and {Monteiro Melo}, Hygor Piaget and Campanelli, Bruno and Loreto, Vittorio}
}

@article{anderson2022,
title = {Topological data analysis and {UNICEF} {Multiple Indicator Cluster Surveys}},
journal = {Journal of Quantitative Economics},
volume = {20},
pages = {281--309},
year = {2022},
author = {Jun Ru Anderson and Fahrudin Memi\'{c} and Ismar Voli\'{c}}
}

@misc{sfil2025,
	title = {Access graph: {A} novel graph representation of public transport networks for accessibility analysis},
	url = {https://arxiv.org/abs/2507.08361},
	publisher = {arXiv},
	author = {Tina \v{S}filigoj and Aljo\v{s}a Peperko and Oded Cats},
	year = {2025},
	note = {arXiv:2507.08361}
}

@misc{gorjian2025,
	title = {Green schoolyard investments influence local-level economic and equity outcomes through spatial-statistical modeling and geospatial analysis in urban contexts},
	url = {https://arxiv.org/abs/2507.14232},
	publisher = {arXiv},
	author = {Mahshid Gorjian},
	year = {2025},
	note = {arXiv:2507.14232}
}

@article{skipper2025,
author = {Skipper, Daphne and Agarwala, Susama and Murrell, Joshua and Rosenberg, Chad and Wilson, Dan and Logan, Tom M.},
title = {Optimizing fair geographic access to polling},
journal = {Election Law Journal: Rules, Politics, and Policy},
doi = {10.1089/elj.2024.0060},
}

@article{horton2025b,
title = {A scalable optimization approach for equitable facility location: {M}ethodology and transportation applications},
journal = {Transportation Research Part B: Methodological},
volume = {201},
eid = {103319},
year = {2025},
author = {Drew Horton and Joshua Murrell and Daphne Skipper and Emily Speakman and Tom Logan}
}

@Article{horton2025,
  Title                    = {Hundreds of grocery outlets needed across the United States to achieve walkable cities},
  Author                   = {Drew Horton and Tom M. Logan and Emily Speakman and Daphne Skipper},
  Journal                  = {Nature Communications},
  Year                     = {2025},
  volume 		= {16},
  eid 			= {6051}
}

@incollection{feng2025,
  author = {Michelle Feng},
  title = {Interpreting topology in the context of social science},
  booktitle    = {Mathematical and Computational Methods for Complex Social Systems},
  editor       = {Heather Z. Brooks and Michelle Feng and Mason A. Porter and Alexandria Volkening},
  pages        = {141--163},
  publisher    = {American Mathematical Society},
  year         = 2025,
  volume = {80},
  series       = {Proceedings of Symposia in Applied Mathematics},
  address      = {Providence, RI, USA},
}

@book{dey2022,
  title={Computational Topology for Data Analysis},
  author={Dey, Tamal Krishna and Wang, Yusu},
  year={2022},
  publisher={Cambridge University Press},
  address = {Cambridge, UK}
}

@Article{otter2017,
  Title                    = {A roadmap for the computation of persistent homology},
  Author                   = {Otter, N. and Porter, M. A. and Tillmann, U. and Grindrod, P. and Harrington, H. A.},
  Journal                  = {European Physical Journal --- Data Science},
  Year                     = {2017},
  volume = {6},
  eid = {17}
}

@article{otter2019,
author = {Harrington, Heather A. and Otter, Nina and Schenck, Hal and Tillmann, Ulrike},
title = {Stratifying Multiparameter Persistent Homology},
journal = {SIAM Journal on Applied Algebra and Geometry},
volume = {3},
number = {3},
pages = {439--471},
year = {2019}
}

@article{Carlsson2009,
  author = {Carlsson, Gunnar and Zomorodian, Afra},
  title = {The theory of multidimensional persistence},
  journal = {Discrete \& Computational Geometry},
  volume = {42},
  number = {1},
  pages = {71--93},
  year = {2009}
}

@misc{Lesnick2015,
	title = {Interactive visualization of {2-D} persistence modules},
	url = {https://arxiv.org/abs/1512.00180},
	publisher = {arXiv},
	author = {Lesnick, Michael and Wright, Matthew},
	year = {2015},
	note = {arXiv:1512.00180}
}

@misc{Botnan2022,
	title = {An introduction to multiparameter persistence},
	url = {https://arxiv.org/abs/2203.14289},
	publisher = {arXiv},
	author = {Botnan, Magnus B. and Lesnick, Michael},
	year = {2022},
	note = {arXiv:2203.14289}
}

@software{openstreetmap,
  author = {{OpenStreetMap contributors}},
  title = {{O}pen{S}treet{M}ap},
  url = {https://www.openstreetmap.org},
  note = {Version 1.2.2. (Accessed 11 August 2023)},
  version = {1.2.2},
  year = {2021}
}

@software{rivet,
        author = {{The RIVET Developers}},
        title = {{RIVET}},
        url = {https://github.com/rivetTDA/rivet/},
        version = {1.1.0},
        note = {Version 1.1.0. (Accessed 23 December 2025)}
}

@book{third_place,
  title={The Great Good Place: Caf{\'e}s, Coffee Shops, Community Centers, Beauty Parlors, General Stores, Bars, Hangouts, and How They Get You Through the Day},
  author={Ray Oldenburg},
  isbn={9781557781109},
  lccn={89003235},
  year={1989},
  publisher={Paragon House Publishers},
  address = {St. Paul, MN, USA}
}

@article{feng2020,
  title = {Spatial applications of topological data analysis: {C}ities, snowflakes, random structures, and spiders spinning under the influence},
  author = {Feng, Michelle and Porter, Mason A.},
  journal = {Physical Review Research},
  volume = {2},
  issue = {3},
  eid = {033426},
  year = {2020}
}

@incollection{tda_spatial,
  author = {Michelle Feng and Abigail Hickok and Mason A. Porter},
  title = {Topological data analysis of spatial systems},
  booktitle    = {Higher-Order Systems},
  editor       = {Federico Battiston and Giovanni Petri},
  chapter      = 16,
  pages        = {389--399},
  publisher    = {Springer},
  year         = 2022,
  series       = {Understanding Complex Systems},
  address      = {Cham, Switzerland},
}

@article{feng2021,
  title={Persistent homology of geospatial data: {A} case study with voting},
  author={Michelle Feng and Mason A. Porter},
  journal={SIAM Review},
  volume = {63},
  number = {1},
  pages = {67--99},
  year={2021}
}

@article{covid,
    title = {Analysis of spatial and spatiotemporal anomalies using persistent homology: {C}ase studies with {COVID}-19 data},
    author = {A. Hickok and D. Needell and M. A Porter},
    journal = {SIAM Journal on Mathematics of Data Science},
    volume = {4},
    number = {3},
    pages = {1116--1144},
    year = {2022}
}

@article{corcoran_topological_2023,
	title = {Topological data analysis for geographical information science using persistent homology},
	volume = {37},
	number = {3},
	journal = {International Journal of Geographical Information Science},
	author = {Corcoran, Padraig and Jones, Christopher B.},
	year = {2023},
	pages = {712--745},
}

@article{carmody_topological_2021,
	title = {Topological analysis of traffic pace via persistent homology},
	volume = {2},
	issn = {2632-072X},
	number = {2},
	urldate = {2023-02-12},
	journal = {Journal of Physics: Complexity},
	author = {Carmody, Daniel R. and Sowers, Richard B.},
	year = {2021},
	eid = {025007}
}

@article{corcoran_persistent_2022,
	title = {A persistent homology model of street network connectivity},
	volume = {26},
	number = {1},
	journal = {Transactions in GIS},
	author = {Corcoran, Padraig and Jones, Christopher B.},
	year = {2022},
	pages = {155--181},
}

@article{duchin_homological_2020,
title = {The (homological) persistence of gerrymandering},
author = {Moon Duchin and Tom Needham and Thomas Weighill},
journal = {Foundations of Data Science},
volume = {4},
number = {4},
pages = {581--622},
year = {2022}
}

@article{clause_meta-diagrams_2023,
	title = {Meta-diagrams for 2-parameter persistence},
	volume = {74},
	journal = {Discrete \& Computational Geometry},
	author = {Clause, Nate and Dey, Tamal K. and M\'{e}moli, Facundo and Wang, Bei},
	year = {2025},
	pages = {872--989}
}

@book{Edelsbrunner_Harer_2009, 
 address={Providence, RI, USA}, 
 title={Computational Topology}, 
 publisher={American Mathematical Society}, 
 author={Edelsbrunner, Herbert and Harer, John}, 
 year={2009}
 }

@article{qin_spatial_2020,
	title = {Spatial accessibility analysis of parks with multiple entrances based on real-time travel: {The} case study in {Beijing}},
	volume = {12},
	number = {18},
	journal = {Sustainability},
	author = {Qin, Jiahui and Liu, Yusi and Yi, Disheng and Sun, Shuo and Zhang, Jing},
	year = {2020},
	eid = {7618}
}

@article{Hickok_Jarman_Johnson_Luo_Porter_2024, 
 title={Persistent homology for resource coverage: {A} case study of access to polling sites}, 
 volume={66}, 
 number={3}, 
 journal={SIAM Review}, 
 author={Hickok, Abigail and Jarman, Benjamin and Johnson, Michael and Luo, Jiajie and Porter, Mason A.}, 
 year={2024}, 
 pages={481--500}
 }

@article{vipond_multiparameter_2021,
	title = {Multiparameter persistent homology landscapes identify immune cell spatial patterns in tumors},
	volume = {118},
	number = {41},
	journal = {Proceedings of the National Academy of Sciences of the United States of America},
	author = {Vipond, Oliver and Bull, Joshua A. and Macklin, Philip S. and Tillmann, Ulrike and Pugh, Christopher W. and Byrne, Helen M. and Harrington, Heather A.},
	year = {2021},
	eid = {e2102166118},
}

@article{radke_spatial_2000,
	title = {Spatial {Decompositions}, {Modeling} and {Mapping} {Service} {Regions} to {Predict} {Access} to {Social} {Programs}},
	volume = {6},
	issn = {1082-4006},
	number = {2},
	journal = {Geographic Information Sciences},
	author = {Radke, John and Mu, Lan},
	year = {2000},
	pages = {105--112},
}

@article{luo_measures_2003,
	title = {Measures of {Spatial} {Accessibility} to {Healthcare} in a {GIS} {Environment}: {Synthesis} and a {Case} {Study} in {Chicago} {Region}},
	volume = {30},
	issn = {0265-8135},
	shorttitle = {Measures of {Spatial} {Accessibility} to {Healthcare} in a {GIS} {Environment}},
	number = {6},
	journal = {Environment and Planning. B, Planning \& Design},
	author = {Luo, Wei and Wang, Fahui},
	month = dec,
	year = {2003},
	pmid = {34188345},
	pmcid = {PMC8238135},
	pages = {865--884}
}

@article{xing_environmental_2020,
	title = {An environmental justice study on spatial access to parks for youth by using an improved {2SFCA} method in {Wuhan}, {China}},
	volume = {96},
	journal = {Cities},
	author = {Xing, Lijun and Liu, Yanfang and Wang, Baoshun and Wang, Yiheng and Liu, Haijiang},
	year = {2020},
	eid = {102405},
}

@article{dai_measuring_2022,
	title = {Measuring equality in access to urban parks: {A} big data analysis from {Chengdu}},
	volume = {10},
	doi = {https://doi.org/10.3389/fpubh.2022.1022666},
	journal = {Frontiers in Public Health},
	author = {Dai, Weiwei and Yuan, Suyang and Liu, Yangyang and Peng, Dan and Niu, Shaofei},
	year = {2022},
}

@Article{liu_2sfca_elderly_care,
AUTHOR = {Liu, Linggui and Lyu, Han and Zhao, Yi and Zhou, Dian},
TITLE = {An improved two-step floating catchment area ({2SFCA}) method for measuring spatial accessibility to elderly care facilities in {Xi'an, China}},
JOURNAL = {International Journal of Environmental Research and Public Health},
VOLUME = {19},
YEAR = {2022},
NUMBER = {18},
eid = {11465}
}

@Article{jin_huff-based_2sfca,
AUTHOR = {Jin, He and Lu, Yongmei},
TITLE = {Multi-mode {H}uff-based {2SFCA}: {E}xamining geographical accessibility to food outlets in {Austin, Texas}},
JOURNAL = {ISPRS International Journal of Geo-Information},
VOLUME = {11},
YEAR = {2022},
NUMBER = {11},
eid = {579},
}

@article{KaufmannVisputeKansal,
   author = "Talia  Kaufmann and Swapnil  Vispute and Mansi  Kansal and Daniel T. O'Brien and Tomer  Shekel and Evgeniy  Gabrilovich and Gregory A. Wellenius and Lewis Dijkstra and Paolo Veneri",
   title = "Comparing access to urban parks across six {OECD} countries",
   year = "2023",
   journal = {OECD Regional Development Papers},
   eid = "45",
   doi = {https://doi.org/10.1787/4d17ce4c-en}
}

@article{SCHIPPERIJN2017253,
title = {Access to parks and physical activity: {A}n eight country comparison},
journal = {Urban Forestry \& Urban Greening},
volume = {27},
pages = {253--263},
year = {2017},
author = {Jasper Schipperijn and Ester Cerin and Marc A. Adams and Rodrigo Reis and Graham Smith and Kelli Cain and Lars B. Christiansen and van Dyck, Delfien and Christopher Gidlow and Lawrence D. Frank and Josef Mit\'{a}\v{s} and Michael Pratt and Deborah Salvo and Grant Schofield and James F. Sallis}
}

@article{ZHANG2023104191,
title = {Accessibility in a multiple transport mode urban park based on the {``D-D"} model: {A} case study in {Park City, Chengdu}},
journal = {Cities},
volume = {134},
eid = {104191},
year = {2023},
author = {Ziyang Zhang and Guoqiang Ma and Xiang Lin and Haoyu Dai}
}

@article{sturm_proximity_park_2014,
	title = {Proximity to urban parks and mental health},
	volume = {17},
	number = {1},
	journal = {The Journal of Mental Health Policy and Economics},
	author = {Sturm, Roland and Cohen, Deborah},
	year = {2014},
	pages = {19--24},
}

@article{kaczynski_proximity_park_2014,
	title = {Are park proximity and park features related to park use and park-based physical activity among adults? {Variations} by multiple socio-demographic characteristics},
	volume = {11},
	shorttitle = {Are park proximity and park features related to park use and park-based physical activity among adults?},
	number = {1},
	journal = {International Journal of Behavioral Nutrition and Physical Activity},
	author = {Kaczynski, Andrew T. and Besenyi, Gina M. and Stanis, Sonja A. Wilhelm and Koohsari, Mohammad Javad and Oestman, Katherine B. and Bergstrom, Ryan and Potwarka, Luke R. and Reis, Rodrigo S.},
	year = {2014},
	eid = {146},
}

@article{vrijheid_landfill_2000,
	title = {Health effects of residence near hazardous waste landfill sites: {A} review of epidemiologic literature},
	volume = {108, Supplement 1},
	shorttitle = {Health effects of residence near hazardous waste landfill sites},
	number = {Supplement 1},
	journal = {Environmental Health Perspectives},
	author = {Vrijheid, M.},
	year = {2000},
	pages = {101--112},
}

@article{njoku_landfill_2019,
	title = {Health and environmental risks of residents living close to a landfill: {A} case study of {Thohoyandou} landfill, {Limpopo} {Province}, {South} {Africa}},
	volume = {16},
	number = {12},
	journal = {International Journal of Environmental Research and Public Health},
	author = {Njoku, Prince O. and Edokpayi, Joshua N. and Odiyo, John O.},
	year = {2019},
	eid = {2125},
}

@article{auchincloss_alcohol_2022,
	title = {Alcohol outlets and alcohol consumption in changing environments: {P}revalence and changes over time},
	volume = {17},
	issn = {1747-597X},
	shorttitle = {Alcohol outlets and alcohol consumption in changing environments},
	number = {1},
	urldate = {2023-08-14},
	journal = {Substance Abuse Treatment, Prevention, and Policy},
	author = {Auchincloss, Amy H. and Niamatullah, Saima and Adams, Maura and Melly, Steven J. and Li, Jingjing and Lazo, Mariana},
	month = feb,
	year = {2022},
	eid = {7},
}

@article{halonen_alcohol_living_2013,
	title = {Living in proximity of a bar and risky alcohol behaviours: {A} longitudinal study},
	volume = {108},
	number = {2},
	urldate = {2023-08-14},
	journal = {Addiction},
	author = {Halonen, Jaana I. and Kivim\"{a}ki, Mika and Virtanen, Marianna and Pentti, Jaana and Subramanian, S. V. and Kawachi, Ichiro and Vahtera, Jussi},
	year = {2013},
	pages = {320--328},
}

@article{seid_alcohol_2018,
	title = {Is proximity to alcohol outlets associated with alcohol consumption and alcohol-related harm in {Denmark}?},
	volume = {35},
	number = {2},
	journal = {Nordic Studies on Alcohol and Drugs},
	author = {Seid, Abdu K. and Berg-Beckhoff, Gabriele and Stock, Christiane and Bloomfield, Kim},
	year = {2018},
	pages = {118--130},
}

@article{winkler1990string,
  title={String comparator metrics and enhanced decision rules in the {Fellegi--Sunter} model of record linkage},
  author={Winkler, William E.},
  year={1990},
note = {U.S. Bureau of the Census, Statistics Research Division. Available at \url{https://files.eric.ed.gov/fulltext/ED325505.pdf}. (Accessed 23 December 2025)}
}

@website{chicagodata,
  author = {{C}hicago {P}ark {D}istrict},
  title = {{C}hicago {D}ata {Portal}},
  url = {https://data.cityofchicago.org/},
  note = {(Accessed 30 January 2023)}
}

@website{landfilldata,
  author = {United States Environmental Protection Agency},
  title = {{L}andfill {M}ethane {O}utreach {P}rogram ({LMOP})},
  url = {https://www.epa.gov/lmop},
  note = {(Accessed 20 July 2023)},
}

@article{Scoccola2023,
    year = {2023},
    publisher = {The Open Journal},
    volume = {8},
    number = {83},
    eid = {5022},
    author = {Luis Scoccola and Alexander Rolle},
    title = {Persistable: {P}ersistent and stable clustering},
    journal = {Journal of Open Source Software}
}

@article{Rolle2020,
    title = {Stable and consistent density-based clustering via multiparameter persistence},
	journal = {The Journal of Machine Learning Research},
	volume = {25},
	number = {1},
	eid = {258},
	author = {Alexander Rolle and Luis Scoccola},
	year = {2024}
}

@misc{oneil_evaluating_2024,
	title = {Evaluating cooling center coverage using persistent homology of a filtered witness complex},
	url = {http://arxiv.org/abs/2410.09067},
	publisher = {arXiv},
	author = {O'Neil, Erin and Tymochko, Sarah},
	year = {2024},
	note = {arXiv:2410.09067},
}

@misc{friesen_understanding_2024,
	title = {Understanding {U}.{S}. racial segregation through persistent homology},
	url = {http://arxiv.org/abs/2410.10886},
	publisher = {arXiv},
	author = {Friesen, Ori and Ziegelmeier, Lori},
	year = {2024},
	note = {arXiv:2410.10886}
}

@misc{kauba_demographics_2024,
	title = {Topological analysis of {U}.{S}. city demographics},
	url = {http://arxiv.org/abs/2410.10886},
	publisher = {arXiv},
	author = {Jakini A. Kauba and Thomas Weighill},
	year = {2023},
	note = {arXiv:2310.08334},
}

@article{moi_urban_topology_2024,
  title = {Urban topology and dynamics can assess the importance of green areas},
  author = {Moi, Jacopo and Chiesi, Leonardo and Chirici, Gherardo and Francini, Saverio and Borghi, Costanza and Costa, Paolo and Galmarini, Bianca and Caldarelli, Guido},
  journal = {Physical Review E},
  volume = {110},
  issue = {6},
  eid = {064128},
  year = {2024},
}

@misc{beyond_proximity,
	title = {Cities beyond proximity},
	url = {https://arxiv.org/abs/2411.12335},
	publisher = {arXiv},
	author = {Dan Hill and Matteo Bruno and Hygor Piaget Monteiro Melo and Yuichiro Takeuchi and Vittorio Loreto},
	year = {2024},
	note = {arXiv:2411.12335},
}

@misc{florence_walk,
	title = {Quantifying walkable accessibility to urban services: {A}n application to {F}lorence, {I}taly},
	url = {https://arxiv.org/abs/2504.12934},
	publisher = {arXiv},
	author = {Leonardo Boncinelli and Stefania Miricola and Eugenio Vicario},
	year = {2025},
	note = {arXiv:2504.12934},
}

@article{Bjerkevik_2025, 
 title={Stabilizing decomposition of multiparameter persistence modules}, 
 journal={Foundations of Computational Mathematics}, 
 author={Bjerkevik, H. B.}, 
 year={2025}, 
 doi = {https://doi.org/10.1007/s10208-025-09695-w}
 }

@article{Chung_Day_Hu_2022, 
 title={A multi-parameter persistence framework for mathematical morphology}, 
 volume={12}, 
 number={1}, 
 journal={Scientific Reports}, 
 author={Chung, Yu-Min and Day, Sarah and Hu, Chuan-Shen}, 
 year={2022}, 
 eid={6427}
 }

@inproceedings{Loiseaux2023, 
 author = {Loiseaux, David and Carri\`{e}re, Mathieu and Blumberg, Andrew J.}, 
 title = {A framework for fast and stable representations of multiparameter persistent homology decompositions}, 
 year = {2023}, 
 publisher = {Curran Associates Inc.}, 
 address = {Red Hook, NY, USA}, 
 booktitle = {Proceedings of the 37th International Conference on Neural Information Processing Systems}, 
 eid = {1553}, 
 location = {New Orleans, LA, USA}, 
 series = {NIPS '23}
  }

@article{Lesnick2022,
author = {Lesnick, Michael and Wright, Matthew},
title = {Computing Minimal Presentations and Bigraded Betti Numbers of 2-Parameter Persistent Homology},
journal = {SIAM Journal on Applied Algebra and Geometry},
volume = {6},
number = {2},
pages = {267--298},
year = {2022}
}

@misc{arulandu2025grapevine,
	title = {Through the grapevine: {V}ineyard distance as a measure of topological dissimilarity},
	url = {https://arxiv.org/abs/2510.24472},
	publisher = {arXiv},
	author = {{Arulandu}, Alvan and {Gottschalk}, Daniel and {Payne}, Thomas and {Richardson}, Alexander and {Weighill}, Thomas},
	year = {2025},
	note = {arXiv:2510.24472}
}

@misc{fairchild2025mortality,
	title = {Topological data analysis of mortality patterns during the {COVID-19} pandemic},
	url = {https://arxiv.org/abs/2510.15066},
	publisher = {arXiv},
	author = {{Fairchild}, Megan and {Lemoine}, Matthew},
	year = {2025},
	note = {arXiv:2510.15066}
}

@article{SCHERTZ2025102624,
title = {Identifying qualities and amenities associated with subjective cognitive restoration and improved affect in urban parks},
journal = {Journal of Environmental Psychology},
volume = {104},
eid = {102624},
year = {2025},
author = {Kathryn E. Schertz and Kimberly L. Meidenbauer and Tiara R. Freeman and Isabella M. Santiago and Elizabeth A. Janey and Kathryn Gehrke and Anya L. Samtani and Andrew J. Stier and Marc G. Berman}
}

@misc{weng2025beyonddistance,
	title = {Beyond distance: {M}obility neural embeddings reveal visible and invisible barriers in urban space},
	url = {https://arxiv.org/abs/2506.24061},
	publisher = {arXiv},
	author = {{Weng}, Guangyuan and {Kim}, Minsuk and {Ahn}, Yong-Yeol and {Moro}, Esteban},
	year = {2025},
	note = {arXiv:2506.24061}
}

@misc{rodriguez2025quantifying,
	title = {Quantifying displacement: {A} gentrification's consequence via persistent homology},
	url = {https://arxiv.org/abs/2512.10753},
	publisher = {arXiv},
	author = {{Rodr\'iguez V\'azquez}, Rita and {Cuerno}, Manuel}}

@misc{cruz-gonzalez2025sexual_healthcare,
	title = {Defunding sexual healthcare: {A} topological investigation of resource accessibility},
	url = {https://arxiv.org/abs/2512.12011},
	publisher = {arXiv},
	author = {Denise Gonzalez-Cruz and Genesis Encarnacion and Kaili Martinez-Beasley and Robin Wilson and Nicholas Arosemena and Atilio Barreda and Omayra Ortega and Daniel A. Cruz},
	year = {2025},
	note = {arXiv:2512.12011}
}

@misc{wu2026ecosystemservicedemand,
	title = {Ecosystem service demand relationship and trade-off patterns in urban parks across {C}hina},
	url = {https://arxiv.org/abs/2602.11442},
	publisher = {arXiv},
	author = {Shuyao Wu and Delong Li and Zhonghao Zhang},
	year = {2026},
	note = {arXiv:2602.11442}
}

@inproceedings{singh2007topological,
booktitle = {Eurographics Symposium on Point-Based Graphics},
editor = {M. Botsch and R. Pajarola and B. Chen and M. Zwicker},
title = {Topological Methods for the Analysis of High Dimensional Data Sets and 3D Object Recognition},
author = {Singh, Gurjeet and M{\'e}moli, Facundo and Carlsson, Gunnar},
year = {2007},
publisher = {The Eurographics Association},
ISSN = {1811-7813},
ISBN = {978-3-905673-51-7},
DOI = {10.2312/SPBG/SPBG07/091-100}
}

@article{moreno2021introducing,
  title={Introducing the “15-Minute City”: Sustainability, resilience and place identity in future post-pandemic cities},
  author={Moreno, Carlos and Allam, Zaheer and Chabaud, Didier and Gall, Catherine and Pratlong, Florent},
  journal={Smart cities},
  volume={4},
  number={1},
  pages={93--111},
  year={2021},
  publisher={Multidisciplinary Digital Publishing Institute}
}

@misc{GoogleAPI,
	title = {Google Places API},
	url = {https://developers.google.com/maps/documentation/places/web-service/place-details},
	author = {Google Developers},
	note = {Accessed January 2023.}
}

@misc{GoogleMapsAPI,
	title = {Google Maps API},
	url = {https://developers.google.com/maps/documentation/javascript},
	author = {Google Developers},
}

@article{multipers,
  title = {Multipers: {{Multiparameter Persistence}} for {{Machine Learning}}},
  shorttitle = {Multipers},
  author = {Loiseaux, David and Schreiber, Hannah},
  year = {2024},
  month = nov,
  journal = {Journal of Open Source Software},
  volume = {9},
  number = {103},
  pages = {6773},
  issn = {2475-9066},
  doi = {10.21105/joss.06773},
  langid = {english},
}

@article{loiseaux2025multi,
  title={Multi-parameter Module Approximation: an efficient and interpretable invariant for multi-parameter persistence modules with guarantees},
  author={Loiseaux, David and Carriere, Mathieu and Blumberg, Andrew J},
  journal={Journal of Applied and Computational Topology},
  volume={9},
  number={4},
  pages={1--60},
  year={2025},
  publisher={Springer}
}

@Misc{ourgithub,
  author =	 {Sarah Tymochko and Gillian Grindstaff and Abigail Hickok and Jerry Luo and Mason Porter},
  title =	 {\texttt{multiparam-resource-coverage} {G}ithub {R}epository},
  url = {https://github.com/sarahtymochko/multiparam-resource-coverage}
}

@article{vipond_landscapes_2020,
  title   = {Multiparameter persistence landscapes},
  author  = {Vipond, Oliver},
  journal = {Journal of Machine Learning Research},
  volume  = {21},
  number  = {61},
  pages   = {1--38},
  year    = {2020}
}


\end{document}